\DeclareMathOperator{\tr}{tr}
\begin{document}
%
\title{Optimal Bit Allocation Variable-Resolution ADC for Massive MIMO}
\author{\IEEEauthorblockN{I. Zakir Ahmed$^\star$, Hamid Sadjadpour$^\star$, and Shahram Yousefi$^\ast$\\}
\IEEEauthorblockA{$^\star$ Department of Electrical and Computer Engineering, UC Santa Cruz. $^\ast$Department of Electrical and Computer Engineering, Queen's University, Canada}}


%


\maketitle
\thispagestyle{fancy}
\vspace{-1.0in}
\begin{abstract}
In this paper, we derive an optimal ADC bit-allocation (BA) condition for a Single-User (SU) Millimeter wave (mmWave) Massive Multiple-Input Multiple-Output (Ma-MIMO) receiver equipped with variable-resolution ADCs under power constraint with the following criteria: (i) Minimizing the Mean Squared Error (MSE) of the received, quantized and combined symbol vector and (ii) Maximizing the capacity of the SU mmWave Ma-MIMO channel encompassing hybrid precoder and combiner. Optimal BA under both criteria results the same.
We jointly design the hybrid combiner  based on the SVD of the channel. We demonstrate improvement of  the proposed optimal BA over the BA based on Minimization of the Mean Square Quantization Error (MSQE). Using Monte-Carlo simulations, it is shown that the MSE and capacity performance of the proposed BA is very close to that of the Exhaustive Search (ES). 
The computational complexity of the proposed techniques are compared with ES and MQSE BA algorithms.
\end{abstract}


%

%
\IEEEpeerreviewmaketitle

\newcommand{\Xmatrix}{
\begin{bmatrix}
\ddots  & 0     & 0 \\
0  & \frac{1}{{\sigma_i^2}} & 0 \\
0 & 0 & \ddots
\end{bmatrix}}
\newcommand{\Ymatrix}{
\begin{bmatrix}
\ddots  & 0  & 0 \\
0  & \frac{f(b_i)l_i}{\big(1-f(b_i)\big) \sigma_i^2} & 0 \\
0 & 0 & \ddots
\end{bmatrix}}
\newcommand{\Zmatrix}{
\begin{bmatrix}
\ddots  & 0  & 0 \\
0  & \frac{\sigma_i^2}{\sigma_n^2 + \frac{f(b_i)l_i}{\big(1-f(b_i)\big)}} & 0 \\
0 & 0 & \ddots
\end{bmatrix}}
\newcommand{\ZMmatrix}{
\begin{bmatrix}
\ddots  & 0  & 0 \\
0  & \frac{\sigma_i^2}{\sigma_n^2 + \frac{f(b_i)l_i}{\big(1-f(b_i)\big)}} + \frac{1}{p} & 0 \\
0 & 0 & \ddots
\end{bmatrix}}
\newcommand{\Fmatrix}{
\begin{bmatrix}
\ddots  & 0  & 0 \\
0  & 10 & 0 \\
0 & 0 & \ddots 
\end{bmatrix}}
\newcommand{\InvFmatrix}{
\begin{bmatrix}
\ddots  & 0 & 0 \\
0  & f(b_i)\big(1-f(b_i)\big)l_i & 0 \\
0 & 0 & \ddots
\end{bmatrix}}

\section{Introduction}\label{Intro}
Massive Multiple-Input Multiple-Output (Ma-MIMO) technology is a key feature in consideration for next generation of wireless communication standards. It is being considered both at sub-6Ghz  and millimeter wave (mmWave) frequencies $\cite{5GBackHaul, SigProc}$. In both scenarios, a large number of antennas help to increase the capacity of the system or increase the Energy-Efficiency (EE). With favorable channel conditions, a combination of both can be achieved. Hybrid precoding and combining can be used to capture the potential advantages. As such, precoders and combiners both in analog and digital domains, are adapted to the changing channel conditions $\cite{SigProc,mmPreCom}$.

The Single-User (SU) Ma-MIMO typically found in the deployment of back-haul wireless interconnects between the Base Stations (BS) \cite{5GBackHaul,5GBackHaul2}, can exploit large-bandwidth to provide for large back-haul traffic with multiple streams or data channels over a single link. The large number of RF paths increases the cost of the RF components in addition to power consumption. By splitting the precoding and combining between analog and digital domains (hybrid precoding and combining), the number of RF paths can be reduced considerably as compared to the number of transmit and receive antennas. The ADCs operating at such large bandwidths consume large amount of power \cite{5GBackHaul,SigProc,Rangan}. In addition to power consumption, high resolution ADCs operating at high sampling frequencies produce huge amounts of data that is difficult to handle. Using very low bit resolution ADCs (1-2 bits) is a popular approach for massive MIMO receiver architectures to mitigate large power demands \cite{Risi,Jmo}. However this comes at a cost of performance especially at medium to high SNR regimes. It has been shown that adopting variable-resolution ADCs in Ma-MIMO receivers improves the performance of the communication link both in  terms of Mean Squared Error (MSE) and overall capacity under a receiver power constraint. In addition, the ADC bit-allocation (BA) has been found to depend on the channel conditions \cite{VarBitAlloc,Zakir1,Zakir2,Zakir3}. 
\vspace{-0.6in}
\subsection{Previous Works} \label{pwork}
Low resolution ADC MIMO receiver architectures using 1-bit and few n-bit  have been studied extensively over the last few years \cite{Risi,Jmo,Mezghani,Muris,HybArchCap,Uplink}. The 1-bit ADC receiver architecture \cite{Muris,HybArchCap,Uplink} improves EE, however at a cost of performance at medium to high SNR regimes. In \cite{Jmo}, it is shown that despite improved deployment cost and EE, there is considerable rate loss in the medium to high SNR regimes with 1-bit ADC architectures. In \cite{SvenCap}, it is shown that by a small increase in the resolution of ADCs (eg., 3-bits) on all RF paths, significant performance gains can be achieved for a broad range of system parameters. In \cite{Muris}, the authors choose a realistic system model and setup (reflecting the hardware implementation) and perform a parametric analysis as a function of ADC resolution. The EE is shown to be maximized at intermediate ADC resolutions typically between 4-8 bits .

All mentioned papers above use equal-bit-resolution ADCs on the receiver's RF paths. Since the resolutions of ADCs are fixed and low, the receiver is power efficient while suffering from performance loss. Hence, an optimal performance-EE trade-off is to be obtained. It has been shown \cite{VarBitAlloc,VarBitAllocJour,Zakir1,Zakir2,Zakir3} that the ADC resolutions on each RF path need not be equal; they rather need to be adjusted for a given power budget and channel conditions. Thus, employing variable-resolution ADCs on the receiver's RF paths can be advantageous. The BA schemes can typically be updated over one to a few coherence times. Choi et. al. \cite{VarBitAlloc,VarBitAllocJour,VarBitEner} derived an ADC BA mechanism based on the minimization of the Mean Square Quantization Error (MSQE) under  receiver power constraint. 
\subsection{Our Contribution}\label{cwork}
We derive optimal ADC BA under receiver power constraint for two criteria: (i) Minimizing the MSE of the received, quantized, and combined symbol vector and (ii) Maximizing the capacity of the SU mmWave Ma-MIMO channel. We show that the optimal BA conditions for both criteria coincide owing to the relationship between the capacity and the  Cramer Rao Lower Bound (CRLB) that we establish in Section $\ref{ba}$.\\ 
\indent
\textit{i) MSE minimization criterion:} The expression for MSE is derived and shown that it approaches the CRLB. Minimizing the MSE imposes multiple constraints and hence alternatively we minimize the CRLB with respect to the BA matrix. The BA matrix facilitates variable-bit allocation on the receiver's RF paths. In doing so, we arrive at the conditions for hybrid combiners and a simple algorithm for BA. \\ 
\indent
\textit{ii) Capacity maximization criterion:} The capacity expression for a given SU mmWave channel as a function of BA and hybrid combining matrices is derived. We show that the capacity is a function of the CRLB derived  for MSE minimization. We devise a BA solution based on maximizing the capacity and arrive at exactly the same conditions as those from the minimization of CRLB \cite{Zakir2}. 
\indent


The column vectors are represented as boldface small letters and matrices as boldface uppercase letters. The primary diagonal of a matrix is denoted as $\text{diag}(\cdot)$ and all expectations $E[\cdot]$ are over the random variable $\bold{n}$, which is an AWGN vector, i.e., $E[\cdot] = E_{\bold{n}}[\cdot]$. The multivariate normal distribution with mean $\boldsymbol{\mu}$ and covariance $\boldsymbol{\varphi}$ is denoted as $\mathcal{N}(\boldsymbol{\mu},\boldsymbol{\varphi})$ and $\mathcal{CN}(\bold{0},{\boldsymbol{\varphi}})$ denotes a multivariate complex-valued circularly-symmetric Gaussian distribution. Frobenius norm of a matrix $\bold{A}$ is shown as ${\lVert\bold{A}\rVert }_F$, the trace as $\tr{(\bold{A})}$ and the identity matrix of size $N$ as $\bold{I}_N$. The term $h(\bold{x})$ defines the differential entropy of a continuous random variable $\bold{x}$. The superscripts $T$ and $H$ denote transpose and Hermitian transpose, respectively. 

This paper is organized as follows. Section \ref{sigmod} describes the system model and parameters. Section \ref{PreCoderLb} describes the precoder design and Section \ref{ba} derives the optimal BA conditions for the two scenarios mentioned above. Section \ref{comb_dsgn} details the optimal hybrid combiner structure and design. In Section \ref{Sim}, we present the simulation results, test setup, and computational complexity analysis, followed by the conclusion in Section \ref{conc}.

\section{Signal Model}\label{sigmod}
The signal model for a typical SU Ma-MIMO transceiver encompassing a hybrid precoding and combining is shown in Figure \ref{fig:Fig1new.pdf}. ${\bold{F}_D}$ and ${\bold{F}_A}$ denote the digital and analog precoders, respectively. Similarly,   ${\bold{W}_D^H}$ and ${\bold{W}_A^H}$ represent the digital and analog combiners, respectively. The vector $\bold{x}$ is an $Ns\times1$ transmitted signal vector with unit average power. Let $N_{rt}$ and $N_{rs}$ denote the number of RF Chains at the transmitter and  receiver, respectively. Also, $N_t$ and $N_r$ represent the number of transmit and receive antennas, respectively. The channel matrix $\bold{H} = \big[ h_{ij} \big]$ is an \begin{math}(N_r\times N_t)\end{math}  matrix representing the line of sight mmWave MIMO channel with properties defined in $\cite{rapaport}$ (chapter 3, pages 99-125).

\begin{figure}[h]
\centering
\setlength{\unitlength}{3947sp}%
\begingroup\makeatletter\ifx\SetFigFont\undefined%
\gdef\SetFigFont#1#2#3#4#5{%
  \reset@font\fontsize{#1}{#2pt}%
  \fontfamily{#3}\fontseries{#4}\fontshape{#5}%
  \selectfont}%
\fi\endgroup%
\begin{picture}(3589,1349)(136,-861)
\put(3151,-511){\makebox(0,0)[lb]{\smash{{\SetFigFont{8}{9.6}{\familydefault}{\mddefault}{\updefault}{\color[rgb]{0,0,0}$\bold{r}$}%
}}}}
\thinlines
{\color[rgb]{0,0,0}\put(345, 75){\framebox(730,389){}}
}%
{\color[rgb]{0,0,0}\put(2341, 75){\framebox(729,389){}}
}%
{\color[rgb]{0,0,0}\put(2341,-849){\framebox(729,389){}}
}%
{\color[rgb]{0,0,0}\put(1367,-849){\framebox(730,389){}}
}%
{\color[rgb]{0,0,0}\put(345,-849){\framebox(730,389){}}
}%
\thicklines
{\color[rgb]{0,0,0}\put(1075,270){\vector( 1, 0){292}}
}%
{\color[rgb]{0,0,0}\put(2097,270){\vector( 1, 0){244}}
}%
{\color[rgb]{0,0,0}\put(3070,270){\vector( 1, 0){291}}
}%
{\color[rgb]{0,0,0}\put(3411,-120){\vector( 0, 1){341}}
}%
{\color[rgb]{0,0,0}\put(3507,270){\line( 1, 0){196}}
\put(3703,270){\line( 0,-1){925}}
\put(3703,-655){\vector(-1, 0){633}}
}%
{\color[rgb]{0,0,0}\put(2341,-655){\vector(-1, 0){244}}
}%
{\color[rgb]{0,0,0}\put(1367,-655){\vector(-1, 0){292}}
}%
{\color[rgb]{0,0,0}\put(199,270){\vector( 1, 0){146}}
}%
{\color[rgb]{0,0,0}\put(345,-655){\vector(-1, 0){146}}
}%
\thinlines
{\color[rgb]{0,0,0}\put(1367, 75){\framebox(730,389){}}
}%
\put(2551,-661){\makebox(0,0)[lb]{\smash{{\SetFigFont{8}{9.6}{\familydefault}{\mddefault}{\updefault}{\color[rgb]{0,0,0}$\bold{W}_A^H$}%
}}}}
\put(1501,-661){\makebox(0,0)[lb]{\smash{{\SetFigFont{8}{9.6}{\familydefault}{\mddefault}{\updefault}{\color[rgb]{0,0,0}$\text{Q}_{\bold{b}} \big( {\bold{z}} \big)$}%
}}}}
\put(526,-661){\makebox(0,0)[lb]{\smash{{\SetFigFont{8}{9.6}{\familydefault}{\mddefault}{\updefault}{\color[rgb]{0,0,0}$\bold{W}_D^H$}%
}}}}
\put(3376,-211){\makebox(0,0)[lb]{\smash{{\SetFigFont{8}{9.6}{\familydefault}{\mddefault}{\updefault}{\color[rgb]{0,0,0}$\bold{n}$}%
}}}}
\put(2176,389){\makebox(0,0)[lb]{\smash{{\SetFigFont{8}{9.6}{\familydefault}{\mddefault}{\updefault}{\color[rgb]{0,0,0}$\bold{\tilde{x}}$}%
}}}}
\put(151,389){\makebox(0,0)[lb]{\smash{{\SetFigFont{8}{9.6}{\familydefault}{\mddefault}{\updefault}{\color[rgb]{0,0,0}$\bold{x}$}%
}}}}
\put(2626,239){\makebox(0,0)[lb]{\smash{{\SetFigFont{8}{9.6}{\familydefault}{\mddefault}{\updefault}{\color[rgb]{0,0,0}$\bold{H}$}%
}}}}
\put(1576,239){\makebox(0,0)[lb]{\smash{{\SetFigFont{8}{9.6}{\familydefault}{\mddefault}{\updefault}{\color[rgb]{0,0,0}$\bold{F}_A$}%
}}}}
\put(601,239){\makebox(0,0)[lb]{\smash{{\SetFigFont{8}{9.6}{\familydefault}{\mddefault}{\updefault}{\color[rgb]{0,0,0}$\bold{F}_D$}%
}}}}
\put(1201,-511){\makebox(0,0)[lb]{\smash{{\SetFigFont{8}{9.6}{\familydefault}{\mddefault}{\updefault}{\color[rgb]{0,0,0}$\bold{\tilde{y}}$}%
}}}}
\put(2176,-511){\makebox(0,0)[lb]{\smash{{\SetFigFont{8}{9.6}{\familydefault}{\mddefault}{\updefault}{\color[rgb]{0,0,0}$\bold{z}$}%
}}}}
\put(151,-511){\makebox(0,0)[lb]{\smash{{\SetFigFont{8}{9.6}{\familydefault}{\mddefault}{\updefault}{\color[rgb]{0,0,0}$\bold{y}$}%
}}}}
\put(3376,239){\makebox(0,0)[lb]{\smash{{\SetFigFont{9}{10.8}{\rmdefault}{\mddefault}{\updefault}{\color[rgb]{0,0,0}+}%
}}}}
{\color[rgb]{0,0,0}\put(3411,288){\circle{136}}
}%
\end{picture}%
\caption{Signal Model.}
\label{fig:Fig1new.pdf}
\end{figure}

The transmitted signal $\bold{\tilde{x}}$ and the received signal $\bold{r}$ are thus known as ${\bold{\tilde{x}}} = {\bold{F}_A}{\bold{F}_D}{\bold{x}}, \text{   }{\bold{r}} = {\bold{H}}{\bold{\tilde{x}}}+{\bold{n}}$.
Here, ${\bold{n}}$ is an $N_r\times1$ noise vector of independent and identically distributed (i.i.d) complex Gaussian random variables such that ${\bold{n}} \sim \mathcal{CN}(\bold{0},{\sigma_n^2}{\bold{I}_{N_r}})$. The received symbol vector $\bold{r}$ is analog-combined with ${\bold{W}_A^H}$ to get ${\bold{z}} = {\bold{W}_A^H}{\bold{r}}$  and later  digitized using a variable-bit quantizer \cite{VarBitAlloc,Zakir1} to produce ${\bold{\tilde{y}}} = \text{Q}_{\bold{b}} \big( {\bold{z}} \big) = \bold{W}_{\alpha}\big( {\bold{b}} \big){\bold{z}}+{\bold{n_q}}$. This  signal is later combined using the digital combiner ${\bold{W}_D^H}$ to produce the output signal ${\bold{y}} = {\bold{W}_D^H}{\bold{\tilde{y}}}$. The quantizer is modeled as an Additive Quantization Noise Model (AQNM) $\cite{Rangan,Uplink}$. 
Here $\bold{b}=[b_1 b_2 b_3 .... b_N]^T$ is a vector whose entries $b_i$ indicate the number of bits $b_i$ (on both I and Q channels) that are allocated to the ADC on RF path $i$. The vector $\bold{n}_q$ has $ \mathcal{CN}(\bold{0},{\bold{D}_q^2})$ distribution and is uncorrelated with $\bold{z}$ \cite{VarBitAlloc}.

Hence, the relationship between the transmitted signal vector $\bold{x}$ and the received symbol vector $\bold{y}$ at the receiver is given by
\begin{equation}\label{eq5a}
{\bold{y}} = {\bold{W}_D^H}{\bold{W}_{\alpha}}{\big( {\bold{b}} \big)}{\bold{W}_A^H}{\bold{H}}{\bold{F}_A}{\bold{F}_D}{\bold{x}} + {\bold{W}_D^H}{\bold{W}_{\alpha}\big( {\bold{b}} \big)}{\bold{W}_A^H}{\bold{n}}+{\bold{W}_D^H}{\bold{n_q}}.
\end{equation}
where the  dimensions of matrices are 
${\bold{F}_D} \in \mathbb{C}^{N_{rt} \times N_s}$, ${\bold{F}_A} \in \mathbb{C}^{N_t \times N_{rt}}$, ${\bold{H}} \in \mathbb{C}^{N_r \times N_t}$, ${\bold{W}_A^H} \in \mathbb{C}^{N_{rs} \times N_r}$, ${\bold{W}_D^H} \in \mathbb{C}^{N_s \times N_{rs}}$ and $\bold{W}_{\alpha}\big( {\bold{b}} \big) \in \mathbb{R}^{N_{rs} \times N_{rs}}$.

With the diagonal BA matrix $\bold{W}_{\alpha}\big( {\bold{b}} \big)$, we intend to design the precoders  ${\bold{F}_D}$ and ${\bold{F}_A}$, and Combiners ${\bold{W}_D^H}$ and ${\bold{W}_A^H}$, along with the ADC BA ${\bold{W}_{\alpha}\big( {\bold{b}} \big)}$ for a given channel realization $\bold{H}$. We assume perfect CSI at the transmitter. We further assume that 
$N_{rs} = N_s$ and the extension to the case $N_{rs} \ne N_s$ is straightforward.

\section{Precoder Design}\label{PreCoderLb}
The hybrid precoding and combing techniques for systems employing phase shifters in mmWave transceiver architectures impose constraints on them. They require that the entries of the analog precoder $\bold{F}_A$ and combiner ${\bold{W}_A^H}$ have constant magnitude entries. Finding optimal $\bold{F}_A$ and  ${\bold{W}_A^H}$ is quite complex given the number of constraints on their design. Instead, we propose to design the precoder and combiner separately \cite{Paulraj}. The precoder  can be designed using \cite{PreDsgn}. Let the Singular Value Decomposition (SVD) of the channel matrix $\bold{H}$ be $\bold{H} = \bold{U}\bold{\Sigma}\bold{F}_{\text{opt}}^H,{\text{ where }} {\bold{U}} \in \mathbb{C}^{N_r \times N_s}, {\bold{\Sigma}} \in \mathbb{R}^{N_s \times N_s}, {\bold{F}_{\text{opt}}} \in \mathbb{C}^{N_t \times N_s}$.
The hybrid precoders are optimized \cite{PreDsgn} as follows. 
\begin{equation}\label{eq8a}
\begin{aligned}
({\bold{F}_A^{\text{opt}}},{\bold{F}_D^{\text{opt}}}) = & \underbrace{\text{argmin}}_{{\bold{F}_D},{\bold{F}_A}}{\lVert {{\bold{F}_{\text{opt}}} - {{\bold{F}_A}{\bold{F}_D}}} \rVert }_F,\text{ such that }{\bold{F_A}}\in{\mathcal{F}_{RF}}, {\lVert {{\bold{F}_D}{\bold{F}_A}} \rVert }_F^2 = N_s. 
\end{aligned}
\end{equation}
The set $\mathcal{F}_{RF}$ consists of all possible analog precoders that correspond  to phase shifter architecture. This includes all possible $(N_t \times N_{rt})$ matrices with constant magnitude entries.
\section{Bit-allocation Design}\label{ba}
In this section, we derive the optimal BA based on the two criteria (i) and (ii) mentioned above.\\
\vspace{-10mm}
\subsection{Bit-allocation based on MSE minimization criterion}\label{ba_mse}
Having designed the precoders in the previous section such that ${\bold{F}_{\text{opt}}} \approx {\bold{F}_A}{\bold{F_D}}$ with the constraints in ($\ref{eq8a}$), we can rewrite ($\ref{eq5a}$) as
\begin{equation}\label{eq9a}
\begin{aligned}
{\bold{y}} &= {\bold{W}_D^H}{\bold{W}_{\alpha}}{\big( {\bold{b}} \big)}{\bold{W}_A^H}{\bold{U}}{\bold{\Sigma}}{\bold{x}} + {\bold{W}_D^H}{\bold{W}_{\alpha}\big( {\bold{b}} \big)}{\bold{W}_A^H}{\bold{n}}+{\bold{W}_D^H}{\bold{n_q}}.
\end{aligned}
\end{equation}
Using ($\ref{eq9a}$), we derive the expression for $\mbox{MSE}$ $\delta$ as
\begin{equation}\label{10aa}
\begin{gathered}
\delta \triangleq \tr{(E\big[ (\bold{y}-\bold{x})^2\big])}\\
\mbox{MSE}(\bold{x}) = E\big[ (\bold{y}-\bold{x})^2\big] = p({\bold{K}}-{\bold{I}_{N_s}})^2 + {\sigma_{n}^2}{\bold{G}\bold{G}^H} + {\bold{W}_D^H}{\bold{D}_q^2}{\bold{W}_D},
\end{gathered}
\end{equation}
where
${\bold{K}} = {\bold{W}_D^H}{\bold{W}_{\alpha}}{\bold{W}_A^H}{\bold{U}}{\bold{\Sigma}}$, $E[{\bold{x}}{\bold{x}}^H] = p{\bold{I}_{N_s}}$, ${\bold{G}} = {\bold{W}_D^H}{\bold{W}_{\alpha}}{\bold{W}_A^H}$, $E[{\bold{n}}{\bold{n}}^H] = {\sigma_n^2}{\bold{I}_{N_r}}$, $E[{\bold{n_q}}{\bold{n_q}}^H] = {\bold{D}_q^2}$. Note that $p$ is the average power of symbol $\bold{x}$, ${\bold{D}_q^2} = {\bold{W}_{\alpha}}{\bold{W}_{1-\alpha}}{\text{diag}}[ {\bold{W}_A^H}{\bold{H}}({\bold{W}_A^H}{\bold{H}})^H+{\bold{I}_{N_{rs}}}]$, and $E[{\bold{n}}{\bold{n_q}}^H] = 0$. For simplicity of notation, we will refer to $\bold{W}_{\alpha}\big( {\bold{b}} \big)$ as $\bold{W}_{\alpha}$.\\
\indent
We intend to design the combiners $\bold{W}_A^H$ and $\bold{W}_D^H$ and the BA matrix $\bold{W}_{\alpha}$ such that the MSE in ($\ref{10aa}$) is minimized. Thus, we set ${\bold{K}} = {\bold{I}_{N_s}}$, which gives ${\bold{G}\bold{G}^H} = {\bold{\Sigma}^{-2}}$. Hence, the $\mbox{MSE}(\bold{x})$ in ($\ref{10aa}$) is reduced to
\begin{equation}\label{10ac}
\mbox{MSE}(\bold{x}) = {\sigma_n^2}{\bold{\Sigma}^{-2}} + {\bold{W}_D^H}{\bold{D}_q^2}{\bold{W}_D}.
\end{equation}

The first term of $\mbox{MSE}(\bold{x})$ in ($\ref{10ac}$) is channel-dependent and the only design parameter is the second term ${\bold{W}_D^H}{\bold{D}_q^2}{\bold{W}_D}$. Thus, the combiners and BA matrix need to be designed with the condition ${\bold{K}} = {\bold{W}_D^H}\bold{W}_{\alpha}{\bold{W}_A^H}{\bold{U}}{\bold{\Sigma}} = \bold{I}_{N_s},\text{  such that }{\bold{W}_D^H}{\bold{D}_q^2}{\bold{W}_D} = \bold{0}$.\\ 
\indent
This is a hard problem to solve for $\bold{W}_A^H$, $\bold{W}_D^H$, and $\bold{W}_{\alpha}$. Thus, we take a slightly different approach. We show that ($\ref{10ac}$) is indeed the Minimum MSE (MMSE) that can be achieved for a given $\bold{W}_A^H$, $\bold{W}_D^H$, $\bold{W}_{\alpha}$, and channel $\bold{H}$.  This is accomplished by deriving the expression for the CRLB for estimating $\bold{x}$, given the observation $\bold{y}$. For the proof, please refer to Theorem $\ref{Thm1}$ in the Appendix. The expression for the CRLB is derived as
\begin{equation}\label{eq26a}
{\bold{I}^{-1}({\bold{\hat{x}}})} = {\sigma_n^2}{\bold{\Sigma}^{-2}}+{\bold{K}^{-1}}{\bold{W}_D^H}{\bold{D}_q^2}{\bold{W}_D}({\bold{K}^H})^{-1}.
\end{equation}
We minimize the CRLB \cite{MinCRLB} as a function of parameters $\bold{W}_A^H$, $\bold{W}_D^H$, and $\bold{W}_{\alpha}$.
\subsubsection{Minimizing the CRLB}\label{min_crlb}
Given the fact that the MMSE derived using  ($\ref{10ac}$) achieves CRLB for fixed design parameters, we now intend to design the combiners $\bold{W}_A$, $\bold{W}_D$ and the BA matrix ${\bold{W}_{\alpha}}$ by minimizing the CRLB. We wish to have ${\bold{I}^{-1}({\bold{\hat{x}}})}$ in ($\ref{eq26a}$) vanish or gets close to zero.
Substituting  $\bold{K}$  into ($\ref{eq26a}$) and simplifying the equation, we arrive at
\begin{equation}\label{eq29a}
{\bold{I}^{-1}({\bold{\hat{x}}})} = {\sigma_n^2}{\bold{\Sigma}^{-2}}+{\bold{\Sigma}}^{-1}{\bold{U}^H}({\bold{W}_A^H})^{-1}{\bold{W}_{\alpha}^{-1}}{\bold{D}_q^2}{\bold{W}_{\alpha}^{-1}}{\bold{W}_A^{-1}}{\bold{U}}{\bold{\Sigma}}^{-1} \approx \bold{0}.       
\end{equation}

Phase shifters or splitters impose constraints on the design of the analog combiner $\bold{W}_A^H$ \cite{SigProc}. We will denote the constrained analog combiner as $\bold{\tilde{W}}_A^H$. The imperfections in the analog combiner are compensated by the digital combiner, that is ${\bold{W}_A^H} = {\bold{W}_D}{\bold{\tilde{W}}_A^H}$.

We also  would like to design the actual analog combiner ${\bold{\tilde{W}}_A^H}$ and the digital combiner ${\bold{W}_D}$, such that
\begin{equation}\label{eq31a}
{\bold{W}_A^H} = {\bold{U}^H} = {\bold{W}_D}{\bold{\tilde{W}}_A^H}.
\end{equation} 

To design the BA, we substitute ($\ref{eq31a}$) in ($\ref{eq29a}$) to arrive at
\begin{equation}\label{eq32a}
{\bold{I}^{-1}({\bold{\hat{x}}})} = {\bold{\Sigma}^{-2}}\bigg[ {\sigma_n^2}{\bold{I}_{N_s}} +  {\bold{W}_{\alpha}^{-2}}{\bold{D}_q^2}\bigg] \approx \bold{0}.
\end{equation}
\indent
The optimal BA solution can  be posed as a constrained optimization problem shown below.
\begin{equation}\label{bitcond_mse}
\bold{b}^* = \underbrace{\text{argmin}}_{\substack{\bold{b} \in \mathbb{I}^{N_s \times 1}; \\ {P_{\text{TOT}}}\leq{P_{\text{ADC}}}}}\Bigg\{{{\bold{\Sigma}^{-2}}\bigg[ {\sigma_n^2}{\bold{I}_{N_s}} +  {\bold{W}_{\alpha}^{-2}}{\bold{D}_q^2}\bigg]}\Bigg\}
\end{equation}
$P_{\text{TOT}}$ is the total power consumed by the ADCs with bit allocation $\bold{b} = [b_1,b_2,...,b_{N_s}]^T$ and is shown to be $\sum_{i=1}^{N} c{f_s}2^{b_i}$,  where $c$ is the power consumed per conversion step and $f_s$ is the sampling rate in Hz \cite{Uplink}. $P_{\text{ADC}}$ is the allowed ADC power budget.\\ 
\indent
In order to satisfy ($\ref{eq32a}$), the required BA condition becomes
\begin{equation}\label{eq33a}
{\bold{\Sigma}}^{2} \gg {\sigma_n^2}{\bold{I}_{N_s}} + {\bold{W}_{\alpha}^{-2}}{\bold{D}_q^2}.
\end{equation}
Since ${\bold{\Sigma}}^2$, ${\bold{W}_{\alpha}^2}$ and ${\bold{D}_q^2}$ are diagonal matrices, we can rewrite $(\ref{eq33a})$ as a set of $N_s$ inequalities  
\begin{equation}\label{eq34a}
{{\sigma_i}^2} \gg \sigma_n^2 + g(b_i)l_i, \text{  for }1 \le i \le N_s,
\end{equation}
where ${\sigma_i}$ is the diagonal element of ${\bold{\Sigma}}$, ${\sigma_n^2}$ is the noise power, $g(b_i)=\frac{f(b_i)}{1-f(b_i)}$ where $f(b_i)$ \cite{VarBitAlloc} is the ratio of the MQSE and the power of the symbol for a non-uniform MMSE quantizer with $b_i$ bits along the RF path $i$, $i=1,2,...,N_s$. The values for $f(b_i)$ are indicated in the Table \ref{betaVal} and $l_i$ is the $i^{th}$ element of $\text{diag}(\bold{I}_{N_s}+\bold{W}_D^H\bold{\Sigma}^2\bold{W}_D)$.
\begin{table}[http]
\begin{center}
\begin{tabu} to 0.5\textwidth { c c c c c c }
\hline
$b_i$  & 1 & 2 & 3 & 4 & 5 \\
\hline
$f(b_i)$ & 0.3634 & 0.1175 & 0.03454 & 0.009497 & 0.002499 \\
\hline
\end{tabu}
\vspace{1mm}
\caption{Values of $f(b_i)$ for different ADC Quantization Bits $b_i$.} \label{betaVal}
\end{center}
\end{table}
\vspace{-18mm}

We need to satisfy all the $N_s$ inequalities in ($\ref{eq34a}$) to attain the optimal BA. However, it may not always be possible to attain optimal BA, given the number of bits and the power budget. In such scenarios, we would make a best-effort approach to satisfy the set of equations in ($\ref{eq34a}$) and the solution would be the best solution given the power constraint.

For a given $N_s$ and allowable range of ADC bit-resolution $N_b$, we first form a super-set $B_{\text{set}}$ of all possible $\bold{b}_j$'s that satisfy the power budget $P_{\text{ADC}}$.
\begin{equation}\label{eq35a}
B_{\text{set}} \triangleq \Big\{ \bold{b}_j = {\big[ b_{j1}, b_{j2}, \dots, b_{jN}  \big]}^T \text{ for } 0 \leq j < N_b^{N_s} \mid 1 \le b_{ji} \le N_b \text{ and } \sum_{i=1}^{N} cf_s2^{b_{ji}} \leq P_{\text{ADC}} \Big\}
\end{equation}

We incorporate a gain term ${K_{f}}(b_i)$ for a given bit $b_i$ on RF path $i$ into the set of equalities in ($\ref{eq34a}$) such that
\begin{equation}\label{eq36a}
{K_{f}}(b_i) \triangleq \frac{{\sigma_i^2}}{\sigma_n^2+g(b_i)l_i}.
\end{equation}
For a given bit allocation $\bold{b}_j$ in $B_{\text{set}}$, we denote
\begin{equation}\label{eq37a}
{K_{f}}(\bold{b}_j) \triangleq \sum_{i=1}^{N_s} \Bigg[\frac{{\sigma_i^2}}{\sigma_n^2+g(b_i)l_i}\Bigg]. 
\end{equation}
We select ${\bold{b}} \in B_{\text{set}}$ to maximize ${K_{f}}(\bold{b}_j)$ and declare that as the desirable BA solution as
\begin{equation}\label{bitcond_mse}
\bold{b}^* = \underbrace{\text{argmax}}_{\substack{\bold{b}_j \in B_{\text{set}}; \\ {P_{\text{TOT}}}\leq{P_{\text{ADC}}}}} K_f(\bold{b}_j).
\end{equation}
The Algorithm is described in Algorthm~$\ref{AlgoCRLB}$ on page \pageref{AlgoCRLB}.

\subsection{Bit-allocation based on capacity maximization}\label{ba_cap}
In this section, we first derive the expression for the capacity of the SU mmWave Ma-MIMO channel encompassing the channel matrix $\bold{H}$, the hybrid precoders ${\bold{F}_D}$, ${\bold{F}_A}$, and the hybrid combiners ${\bold{W}_D^H}$, ${\bold{W}_A^H}$ along with the BA matrix $\bold{W}_{\alpha}$. We then maximize this capacity expression with respect to the BA matrix for a given power budget to arrive at an optimal BA condition.\\

\subsubsection{Capacity Analysis}\label{cap}
Equation \eqref{eq5a} can be simplified as 
\begin{equation}\label{eq9a_cap}
\begin{aligned}
{\bold{y}} = {\bold{W}_D^H}\bold{W}_{\alpha}{\bold{W}_A^H}{\bold{U}}{\bold{\Sigma}}{\bold{x}} + {\bold{W}_D^H}{\bold{W}_{\alpha}\big( {\bold{b}} \big)}{\bold{W}_A^H}{\bold{n}} + {\bold{W}_D^H}{\bold{n_q}} = {\bold{K}}{\bold{x}} + {\bold{n_1}}
\end{aligned}
\end{equation}
where $\bold{n_1} = {\bold{W}_D^H}{\bold{W}_{\alpha}}{\bold{W}_A^H}{\bold{n}} + {\bold{W}_D^H}{\bold{n_q}}$. Note that $\bold{n}$ and $\bold{n_q}$ are Gaussian random vectors with $\bold{n} \sim \mathcal{CN}(\bold{0},{\sigma_n^2}{\bold{I}_{N_r}})$ and  $\bold{n_q} \sim \mathcal{N}(\bold{0},{\bold{D}_q^2})$, respectively. We also know that $\bold{n_1}$ is $\bold{n_1} \sim \mathcal{N}(\bold{0},\bold{\Phi})$ where $\bold{\Phi} = {\sigma_n^2}{\bold{G}}{\bold{G}^H} + {\bold{W}_D^H}{\bold{D}_q^2}{\bold{W}_D}$ \cite{Zakir2}. 

The instantaneous capacity for a given MIMO channel with ADC power constraint and BA can be written as
\begin{equation}\label{eq14a_cap}
\begin{aligned}
C =  \Bigg\{ \underbrace{\text{max}}_{p(\bold{x}),\bold{b} \in \mathbb{I}^{N_s \times 1},{P_{\text{TOT}}}\leq{P_{\text{ADC}}}}{ I\big(\bold{x}; \bold{y}\big) } \Bigg\}
\end{aligned}
\end{equation}
where $I\big(\bold{x}; \bold{y}\big)$ is the mutual information of random variables $\bold{x}$ and $\bold{y}$.\\
\indent
Note that the mutual information in ($\ref{eq14a_cap}$) is maximized with respect to the BA matrix ${\bold{W}_{\alpha}}{\big( {\bold{b}} \big)}$. Equation \eqref{eq14a_cap} can be written as \cite{Thomas}
\begin{equation}\label{eq15a_cap}
I(\bold{x};\bold{y}) = h(\bold{y}) - h(\bold{y}|\bold{x}) = h(\bold{y}) - h(\bold{K}\bold{x} + \bold{n_1}|\bold{x})
= h(\bold{y}) - h(\bold{n_1})
\end{equation}
We assume that $\bold{x}$ and $\bold{n_1}$ are independent. If $\bold{y} \in \mathbb{C}^{N_s}$, then the differential entropy $h(\bold{y})$ is less than or equal to $\log_2\det(\pi e \bold{Q})$ with equality if and only if $\bold{y}$ is circularly symmetric complex gaussian with $E[\bold{y}\bold{y}^H] = \bold{Q}$ \cite{Bengt}. As such,
\begin{equation}\label{eq16a_cap}
E[\bold{y}\bold{y}^H] = \bold{Q} = E \Big[ (\bold{K}\bold{x} + \bold{n_1})(\bold{K}\bold{x} + \bold{n_1})^H \Big] = E \Big[ \bold{K}\bold{x}\bold{x}^H\bold{K}^H + \bold{n_1}\bold{n_1}^H \Big] = p\bold{K}\bold{K}^H + \bold{\Phi}
\end{equation}
Note that $\bold{\Phi} = {{\sigma_n^2}{\bold{G}}{\bold{G}^H} + {\bold{W}_D^H}{\bold{D}_q^2}{\bold{W}_D}}$. Thus, the differential entropies $h(\bold{y})$ and $h(\bold{n_1})$ satisfy
\begin{equation}\label{diffent}
\begin{split}
h(\bold{y}) &\le \log_2\det(\pi e \bold{Q}) = \log_2\det \bigg( \pi e \Big(p \bold{K}\bold{K}^H + \bold{\Phi} \Big) \bigg), \\
h(\bold{n_1}) &\le \log_2\det(\pi e \bold{\Phi}).
\end{split}
\end{equation}
The Theorem~$\ref{Thm2}$ in the Appendix proves that $\bold{n_1}$ is a circularly symmetric jointly Complex Gaussian vector. Hence, we can write
\begin{equation}\label{hn_equal}
h(\bold{n1}) = \log_2\det(\pi e \bold{\Phi}).
\end{equation}
Thus, the maximum mutual information $I(\bold{X};\bold{Y})$ achieved can be written as
\begin{equation}\label{maxI}
I(\bold{X};\bold{Y}) \overset{(a)}=  h(\bold{y}) - h(\bold{n_1}) = \log_2\det( \pi e \bold{Q}) - \log_2\det(\pi e \bold{\Phi})
= \log_2\det \Big ( p\bold{K}\bold{K}^H\bold{\Phi}^{-1} + \bold{I}_{N_s} \Big)
\end{equation}
where (a) follows from the assumption that the input symbol vector $\bold{x}$ is circular symmetric Gaussian vector that could be modeled 
as $\bold{x} \sim \mathcal{CN}(\bold{0},p\bold{I_{N_s}})$. We simplify ($\ref{maxI}$) as
\begin{equation}\label{maxI_cont}
\begin{split}
I(\bold{X};\bold{Y}) &= \log_2\det \Big ( p\bold{K}\bold{K}^H\bold{\Phi}^{-1}\bold{K}\bold{K}^{-1} + \bold{K}\bold{K}^{-1} \Big) = \log_2\det \Big ( p\bold{K} \big ( \bold{K}^H\bold{\Phi}^{-1}\bold{K} + \frac{1}{p}\bold{I}_{N_s} \big) \bold{K}^{-1} \Big)\\
&= \log_2\det ( p\bold{K}) \det \Big ( \bold{K}^H\bold{\Phi}^{-1}\bold{K} + \frac{1}{p}\bold{I}_{N_s} \Big) \det (\bold{K}^{-1}) = \log_2 p^{N_s}\det \Big ( \bold{K}^H\bold{\Phi}^{-1}\bold{K} + \frac{1}{p}\bold{I}_{N_s} \Big).
\end{split}
\end{equation}
The capacity is computed by maximizing $I(\bold{X};\bold{Y})$ for a given channel $\bold{H}$, and for a given combiner pair $\bold{\tilde{W}}_A^H$ and $\bold{W}_D^H$. Hence, the maximization of \eqref{maxI_cont} will be over the BA matrix $\bold{W}_{\alpha}$. Note that $\bold{\Phi}$ is a function of $\bold{W}_{\alpha}$ and $\bold{K}$. Thus
\begin{equation}\label{maxI_cont2}
C = \max \Big\{ \log_2 p^{N_s}\det \Big ( \bold{K}^H\bold{\Phi}^{-1}\bold{K} + \frac{1}{p}\bold{I}_{N_s} \Big) \Big\} = {N_s}\log_2 p + \log_2\det \Big ( ({\bold{I}^{-1}({\bold{\hat{x}}})})^{-1} + \frac{1}{p}\bold{I}_{N_s} \Big).
\end{equation}
Note that ${\bold{I}^{-1}({\bold{\hat{x}}})}$ is the CRLB in \eqref{eq24a} achieved by the MSE $\delta$ \eqref{10aa}.
%
\subsubsection{Maximizing the capacity for optimal bit allocation}\label{cap_max}
Using the capacity expression derived in ($\ref{maxI_cont2}$), the capacity is maximized by selecting some $\bold{b}^*$  for an optimal BA  that satisfies the ADC power constraint. We can write this expression for the maximum capacity from ($\ref{maxI_cont2}$) as
\begin{equation}\label{maxcap}
C = {N_s}\log_2p \text{ }+\underbrace{\text{max}}_{\substack{\bold{b}^*,{P_{\text{TOT}}}\leq{P_{\text{ADC}}}}}\Bigg\{ \log_2  \det \Big ( ({\bold{I}^{-1}({\bold{\hat{x}}})})^{-1} + \frac{1}{p}\bold{I}_{N_s} \Big) \Bigg\}.
\end{equation}
The condition for $\bold{b}^*$ that optimizes ($\ref{maxcap}$) is given by
\begin{equation}\label{bitcond}
\bold{b}^* = \underbrace{\text{argmax}}_{\substack{\bold{b} \in \mathbb{I}^{N_s \times 1}, \\ {P_{\text{TOT}}}\leq{P_{\text{ADC}}}}}\Bigg\{ \log_2  \det \Big ( ({\bold{I}^{-1}({\bold{\hat{x}}})})^{-1} + \frac{1}{p}\bold{I}_{N_s} \Big) \Bigg\}.
\end{equation}
By substituting $\bold{K}$ into ($\ref{eq26a}$) and by designing the structure of the hybrid combiner as in $(\ref{eq31a})$, we can simplify the expression for CRLB as
\begin{equation}\label{crlb_cont}
{\bold{I}^{-1}({\bold{\hat{x}}})} = {\sigma_n^2}{\bold{\Sigma}^{-2}}+{\bold{\Sigma}}^{-1}{\bold{U}^H}({\bold{W}_A^H})^{-1}{\bold{W}_{\alpha}^{-1}}{\bold{D}_q^2}{\bold{W}_{\alpha}^{-1}}{\bold{W}_A^{-1}}{\bold{U}}{\bold{\Sigma}}^{-1} = {\sigma_n^2}{\bold{\Sigma}^{-2}}+{\bold{\Sigma}}^{-2}{\bold{W}_{\alpha}^{-2}}{\bold{D}_q^2}
\end{equation}
We now compute the Inverse of CRLB $\Big({\bold{I}^{-1}({\bold{\hat{x}}})}\Big)^{-1}$ as 
\begin{equation}\label{inv_crlb}
\Big({\bold{I}^{-1}({\bold{\hat{x}}})}\Big)^{-1} = \Big({\sigma_n^2}{\bold{\Sigma}^{-2}}+{\bold{\Sigma}}^{-2}{\bold{W}_{\alpha}^{-2}}{\bold{D}_q^2}\Big)^{-1} = \text{diag}\bigg( \frac{\sigma_1^2}{\sigma_n^2 + g(b_1)l_1}, \cdots, \frac{\sigma_{N_s}^2}{\sigma_n^2 + g(b_{N_s})l_{N_s}}\bigg),
\end{equation}
Substituting $\Big({\bold{I}^{-1}({\bold{\hat{x}}})}\Big)^{-1}$ evaluated in ($\ref{inv_crlb}$) in ($\ref{bitcond}$), we have
\begin{equation}\label{bitcond_cont}
\begin{split}
\bold{b}^* &= \underbrace{\text{argmax}}_{\substack{\bold{b} \in \mathbb{I}^{N_s \times 1}, \\ {P_{\text{TOT}}}\leq{P_{\text{ADC}}}}} \log_2 \det \text{diag} \Big( \frac{\sigma_1^2}{\sigma_n^2 + g(b_1)l_1} + \frac{1}{p},\cdots, \frac{\sigma_{N_s}^2}{\sigma_n^2 + g(b_{N_s})l_{N_s}} + \frac{1}{p} \Big),  \\
&= \underbrace{\text{argmax}}_{\substack{\bold{b} \in \mathbb{I}^{N_s \times 1}, \\ {P_{\text{TOT}}}\leq{P_{\text{ADC}}}}} \log_2 \prod_{i=1}^{N_s} \Bigg( \frac{\sigma_i^2}{\sigma_n^2 + g(b_i)l_i} + \frac{1}{p} \Bigg) = \underbrace{\text{argmax}}_{\substack{\bold{b} \in \mathbb{I}^{N_s \times 1}, \\ {P_{\text{TOT}}}\leq{P_{\text{ADC}}}}} \sum_{i=1}^{N_s} \bigg\{ \log_2  \Big( q(b_i) + 1 \Big) \bigg\},
\end{split}
\end{equation}
where $q(b_i) = \frac{p\sigma_i^2}{\sigma_n^2 + g(b_i)l_i}$.
The term $\log_2  \Big( q(b_i) + 1 \Big)$ can be expanded for two scenarios given below.\\
\textit{Case 1:}
For the case of $0 \leq q(b_i) < 1$,  we have $\log_2  \Big( q(b_i) + 1 \Big) \simeq \frac{q(b_i)}{\ln2}$ with proof provided  in the Appendix (Lemma $\ref{lemm1}$).
Thus, the maximization in $\eqref{bitcond_cont}$ can be written as
\begin{equation}\label{bitcond_cont2}
\bold{b}^* = \underbrace{\text{argmax}}_{\substack{\bold{b} \in \mathbb{I}^{N_s \times 1}, \\ {P_{\text{TOT}}}\leq{P_{\text{ADC}}}}} \sum_{i=1}^{N_s} \frac{p\sigma_i^2}{\sigma_n^2 + g(b_i)l_i}.
\end{equation}\\
\textit{Case 2:} 
For the case $1 \leq q(b_i) < \infty$, we can use Taylor series  to arrive at 
$\log_2  \Big( q(b_i) + 1 \Big) = \Bigg(1-\frac{1}{q(b_i)}\Bigg)P + L(p,\sigma_i^2, \sigma_n^2)$. Proof is provided in Lemma $\ref{lemm2}$ in the Appendix.
$P$ and $L(p,\sigma_i^2, \sigma_n^2)$ are independent of $b_i$. Hence, the maximization in $\eqref{bitcond_cont}$ can be simplified to
\begin{equation}\label{bitcond_cont1}
\bold{b}^* = \underbrace{\text{argmax}}_{\substack{\bold{b} \in \mathbb{I}^{N_s \times 1}, \\ {P_{\text{TOT}}}\leq{P_{\text{ADC}}}}} \sum_{i=1}^{N_s} \Bigg(1-\frac{1}{q(b_i)}\Bigg) = \underbrace{\text{argmax}}_{\substack{\bold{b} \in \mathbb{I}^{N_s \times 1}, \\ {P_{\text{TOT}}}\leq{P_{\text{ADC}}}}} \sum_{i=1}^{N_s} \frac{p\sigma_i^2}{\sigma_n^2 + g(b_i)l_i}.
\end{equation}
We observe that $\eqref{bitcond_cont1}$ coincides with $\eqref{bitcond_cont2}$. Hence both scenarios lead to the same optimization problem. We now define the term $K_f(b_i)$ for a given $b_i$ on RF path $i$ as
\begin{equation}\label{gain_term}
k_f(b_i) \triangleq \frac{p\sigma_i^2}{\sigma_n^2 + g(b_i)l_i}.
\end{equation}
For a given $\bold{b}_j \in B_{\text{set}}$ defined in \eqref{eq35a}, we define
\begin{equation}\label{gain_termsum}
K_f(\bold{b}_j) \triangleq \sum_{i=1}^{N_s} \frac{p\sigma_i^2}{\sigma_n^2 + g(b_i)l_i}.
\end{equation}
The maximization in $\eqref{bitcond_cont1}$ can than be written as
\begin{equation}\label{bitcond_cont3}
\bold{b}^* = \underbrace{\text{argmax}}_{\substack{\bold{b}_j \in B_{\text{set}}, \\ {P_{\text{TOT}}}\leq{P_{\text{ADC}}}}} K_f(\bold{b}_j).
\end{equation}
\indent
Interestingly, this is the similar to MSE criterion in $(\ref{bitcond_mse})$ to obtain the optimal BA.\\
\indent
For a high-resolution ADC, we have $\bold{D}_q^2 = \bold{0}$ and the CRLB defined in ($\ref{crlb_cont}$) reduces to ${\bold{I}^{-1}({\bold{\hat{x}}})} = {\sigma_n^2}{\bold{\Sigma}^{-2}}$. Hence, the capacity with infinite-resolution ADC's $C_{\infty}$ can be derived as
\begin{equation}\label{infcap}
C_{\infty} = \log_2 p^{N_s}\det \Big ( \frac{1}{\sigma_n^2}\bold{\Sigma}^2 + \frac{1}{p}\bold{I}_{N_s} \Big) = \log_2 \det \Big ( \frac{p}{\sigma_n^2}\bold{\Sigma}^2 + \bold{I}_{N_s} \Big).
\end{equation}
\vspace{-0.2in}
With uniform power allocation on the transmitter, $p$ is uniformly divided amoung $N_s$ RF paths and the capacity with uniform power allocation at the transmitter becomes
\begin{equation}\label{infcap_unfm}
C_{\infty} = \sum_{i=1}^{N_s} {\log_2 \Bigg( \frac{\rho}{N_s}\sigma_i^2 + 1\Bigg)},
\end{equation}
where $\rho = \frac{p}{\sigma_n^2}$ is the average SNR at the receiver.\\
Similarly, with perfect CSI at the transmitter and waterfilling, the capacity with high-resolution ADCs can be written as
\begin{equation}\label{infcap_waterfill}
C_{\infty} = \sum_{i=1}^{N_s} {\log_2 \Bigg( \epsilon_i\frac{\rho}{N_s}\sigma_i^2 + 1\Bigg)},
\end{equation}
where $\epsilon_i$ is the portion of the total power $p$ allocated to RF path $i$ at the transmitter based on water-filling algorithm \cite{Vishwa}. Thus ($\ref{infcap_unfm}$) and ($\ref{infcap_waterfill}$) are special cases of \eqref{maxcap}.
\subsubsection{CRLB-based Bit Allocation Algorithm}\label{Algo}
An algorithm to compute the BA based on the MSE minimization $(\ref{bitcond_mse})$ or capacity maximization $(\ref{bitcond_cont3})$ is provided in Algorthm~$\ref{AlgoCRLB}$.
\begin{algorithm}
  \caption{CRLB-based Bit Allocation}\label{AlgoCRLB}
  \begin{algorithmic}
    \Procedure{CRLB-based Bit Allocation}{$\bold{H},\bold{W}_A,\bold{\Sigma},B_{\text{set}},N_s, g(\cdot),\sigma_n^2$}
      \State $\bold{H}\gets \text{MIMO channel}$
      \State $\bold{W}_A\gets \text{Combiners designed as per ($\ref{eq31a}$)}$
      \State $\bold{\Sigma} \gets \text{Matrix containing singular values $\sigma_i$}$
      \State $B_{\text{set}}\gets \text{Bit allocations adhering to ADC power budget}$
      \State $N_s\gets \text{Number of spatially-multiplexed paths}$
      \State $g(\cdot)\gets \text{Quantization error lookup table}$
      \State $\sigma_n^2\gets \text{AWGN power}$
      \For{\texttt{j=0;j++ ;until j<size of $B_{\text{set}}$}}
           \State $K_f(b_j) = 0$
           \For{\texttt{i=0;i++ ;until i<$N_s$}}
                 \State \footnotesize{$K_f(b_j)\gets K_f(b_j) + \frac{{\sigma_i^2}}{\sigma_n^2+g(b_i)l_i}$}
                 \normalsize
           \EndFor
     \EndFor
     \State $index\gets \text{max}(K_f)$
     \State $\bold{b}\gets B_{\text{set}}\text{ at } index$
     \State \textbf{return} $\bold{b}$ \Comment{Optimal Bit Allocation Vector}
    \EndProcedure
  \end{algorithmic}
\end{algorithm}
\vspace{-5mm}
\section{Combiner Design}\label{comb_dsgn}
We design the hybrid combiner similar to hybrid precoder using the optimal structure  derived in $(\ref{eq31a})$. This requires that the left singular matrix $\bold{U}$ of the channel $\bold{H}$ is factored as the product of the constrained analog combiner $\bold{\tilde{W}}_A^H$ and the digital combiner ${\bold{W}_D}$. The hybrid combiner is derived by solving the optimization problem using  method described in \cite{PreDsgn}. 
\begin{equation}\label{comb_desgn}
({\bold{\tilde{W}}_A^{opt}},{\bold{W}_D^{opt}}) =  \underbrace{\text{argmin}}_{{\bold{\tilde{W}}_A},{\bold{W}_D}}{\lVert {{\bold{U}_{\text{opt}}} - {{\bold{\tilde{W}_A}}{\bold{W}_D^H}}} \rVert }_F,\text{ such that }{\bold{\tilde{W}_A}}\in{\mathcal{W}_{RF}}, {\lVert {{\bold{W}_D^H}{\bold{\tilde{W}}_A}} \rVert }_F^2 = N_s 
\end{equation}
$\mathcal{W}_{RF}$ is the set of all possible analog combiners architecture based on phase shifters. This includes all possible $N_r \times N_s$ matrices with constant magnitude entries.
\section{Simulations and Numerical Results} \label{Sim}
We simulate the mmWave channel using the NYUSIM channel simulator with 2 dominant scatters, the configurations specified in Table $\ref{nyusimtab}$ $\cite{nyusim}$. We consider $N_s=8 \text{ or } 12$ strong channels for MSE evaluations and capacity simulations. The combiners are designed as per ($\ref{eq31a}$).\\
\indent
With the above channel, we run the simulations to evaluate the MSE $\delta$ derived in $(\ref{10aa})$ and the capacity as derived in ($\ref{maxI_cont2}$). The plots indicating the MSE $\delta$ at various SNRs are shown for $N_s = 8$ and $N_s = 12$ in Figures $\ref{fig:crlb_Nr8_H64by32.eps}$ and  $\ref{fig:crlb_Nr12_H64by32.eps}$, respectively. Capacity simulations at various SNRs for $N_s = 8$ and $N_s = 12$ are shown in Figures $\ref{fig:crlb_Capacity_Nr8_H64by32.eps}$ and 
 $\ref{fig:crlb_Capacity_Nr12_H64by32.eps}$, respectively. The MSE $\delta$ and the capacity simulations are evaluated with1-bit ADCs, 2-Bit ADCs and with no quantization across all RF paths. This is indicated in Figures $\ref{fig:crlb_Nr8_H64by32.eps}$, $\ref{fig:crlb_Nr12_H64by32.eps}$, $\ref{fig:crlb_Capacity_Nr8_H64by32.eps}$ and $\ref{fig:crlb_Capacity_Nr12_H64by32.eps}$ using lines (a), (b) and (d) respectively. MSE $\delta$ $(\ref{10aa})$ and the capacity ($\ref{maxI_cont2}$) obtained for various SNRs using the ES BA are indicated using the line (c) in the Figures $\ref{fig:crlb_Nr8_H64by32.eps}$, $\ref{fig:crlb_Nr12_H64by32.eps}$, $\ref{fig:crlb_Capacity_Nr8_H64by32.eps}$, and $\ref{fig:crlb_Capacity_Nr12_H64by32.eps}$. We arrive at the BA solution based on our proposed approach of minimizing the MSE (maximizing capacity) in ($\ref{bitcond_cont3}$) and we see that the MSE $\delta$ and the capacity obtained with the BA solution indicated by (line e) is very close to the ES BA solution (line c).\\
\indent
In addition we evaluate the MSE $\delta$ for the MQSE BA using revised minimization of MQSE (rev-MMQSE) as defined in \cite{VarBitAllocJour,VarBitEner}. We employ the SVD based hybrid combiner with this BA scheme. The MSE $\delta$ at various SNRs for $N_s = 8$ and $N_s = 12$ are indicated in Figures $\ref{fig:crlb_Nr8_H64by32.eps}$ and $\ref{fig:crlb_Nr12_H64by32.eps}$ respectively using lines (f). Similarly the capacity evaluated with rev-MQSE BA is plotted using lines (f) in the figures $\ref{fig:crlb_Capacity_Nr8_H64by32.eps}$ and $\ref{fig:crlb_Capacity_Nr12_H64by32.eps}$ for $N_s = 8 \text{ and }12$, respectively.\\
\begin{figure}[t]
\centering
\includegraphics[width=0.5\textwidth]{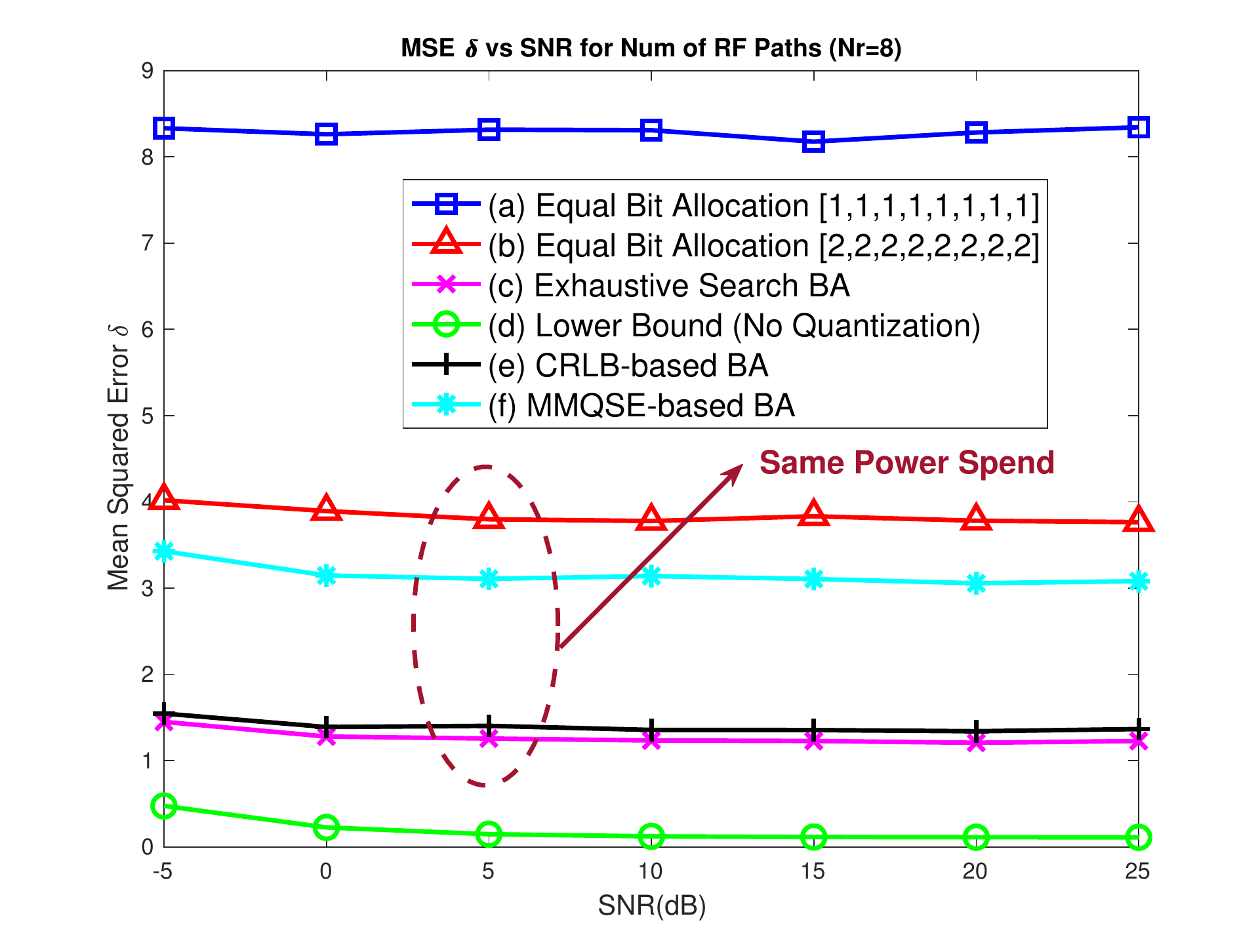}
\caption{MSE $\delta$ vs. SNR for $N_s=8$ for all 1-bit, 2-bit, ES BA, MSQE BA and CRLB BA.}
\label{fig:crlb_Nr8_H64by32.eps}
\includegraphics[width=0.5\textwidth]{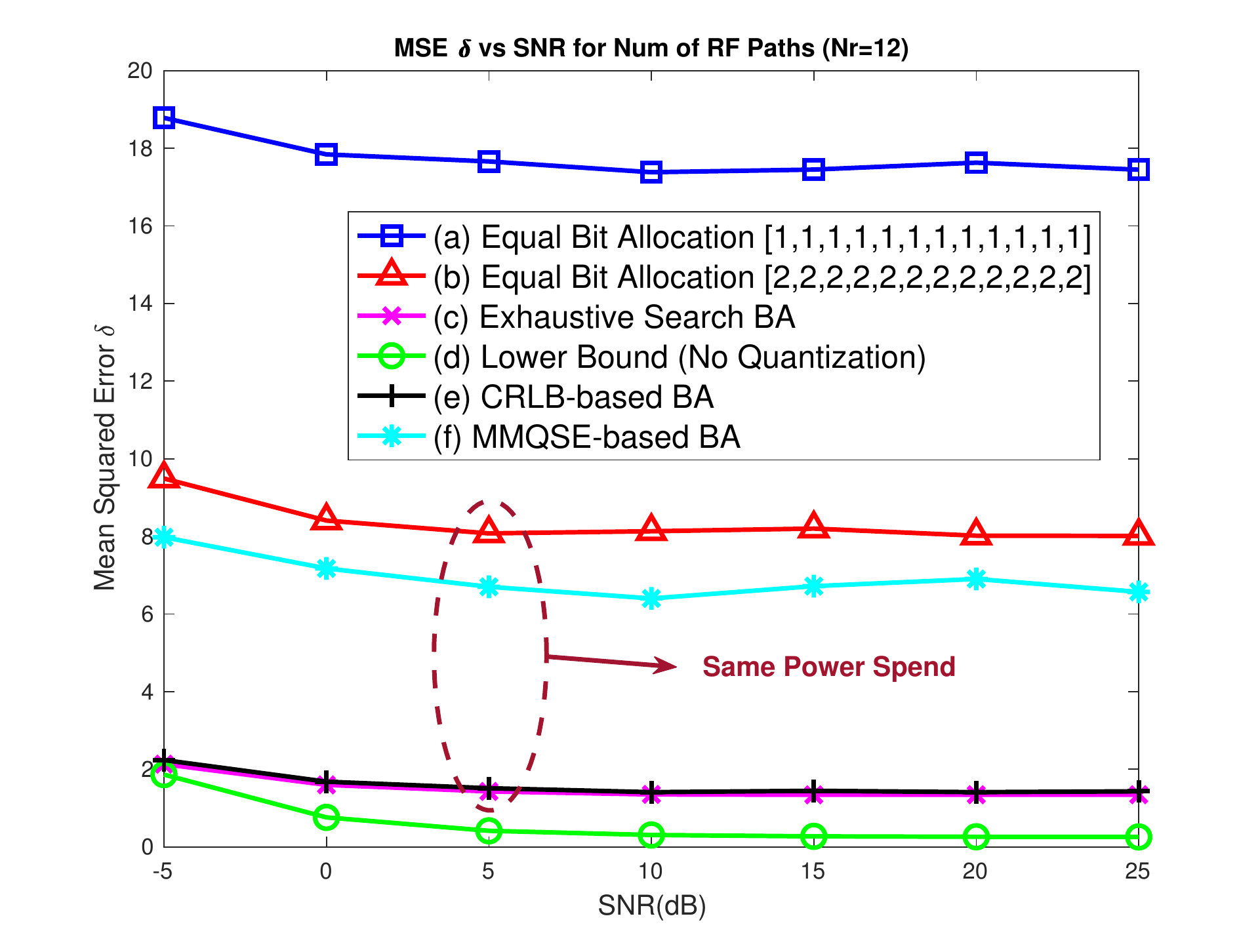}
\caption{MSE $\delta$ vs. SNR for $N_s=12$ for all 1-bit, 2-bit, ES BA, MSQE BA and CRLB BA.}
\label{fig:crlb_Nr12_H64by32.eps}
\end{figure}
\vspace{-0.3in}
\begin{figure}[tp]
\centering
\includegraphics[width=0.5\textwidth]{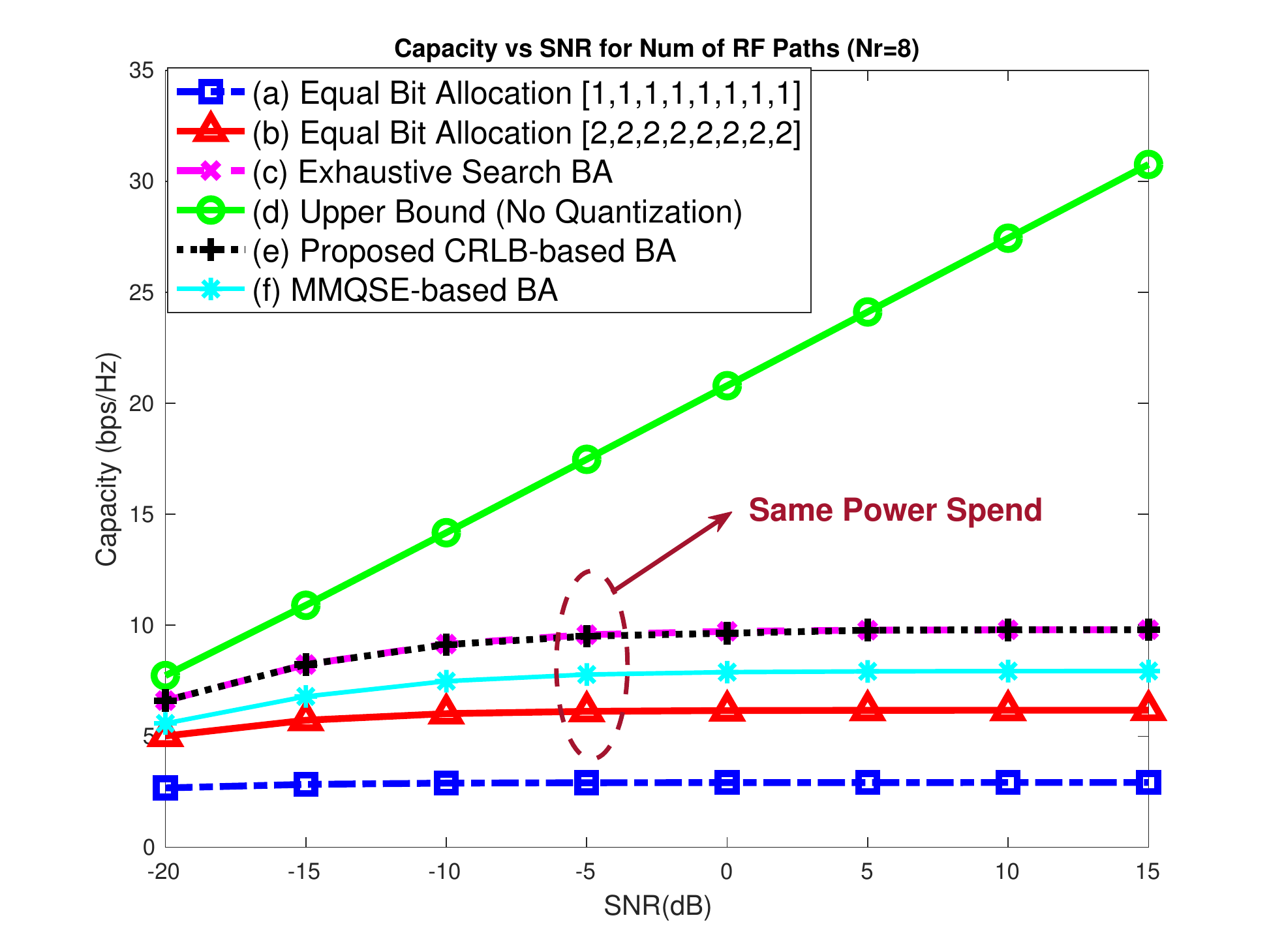}
\caption{Capacity vs. SNR for $N_s=8$ for all 1-bit, 2-bit, ES BA, MSQE BA and CRLB BA.}
\label{fig:crlb_Capacity_Nr8_H64by32.eps}
\includegraphics[width=0.5\textwidth]{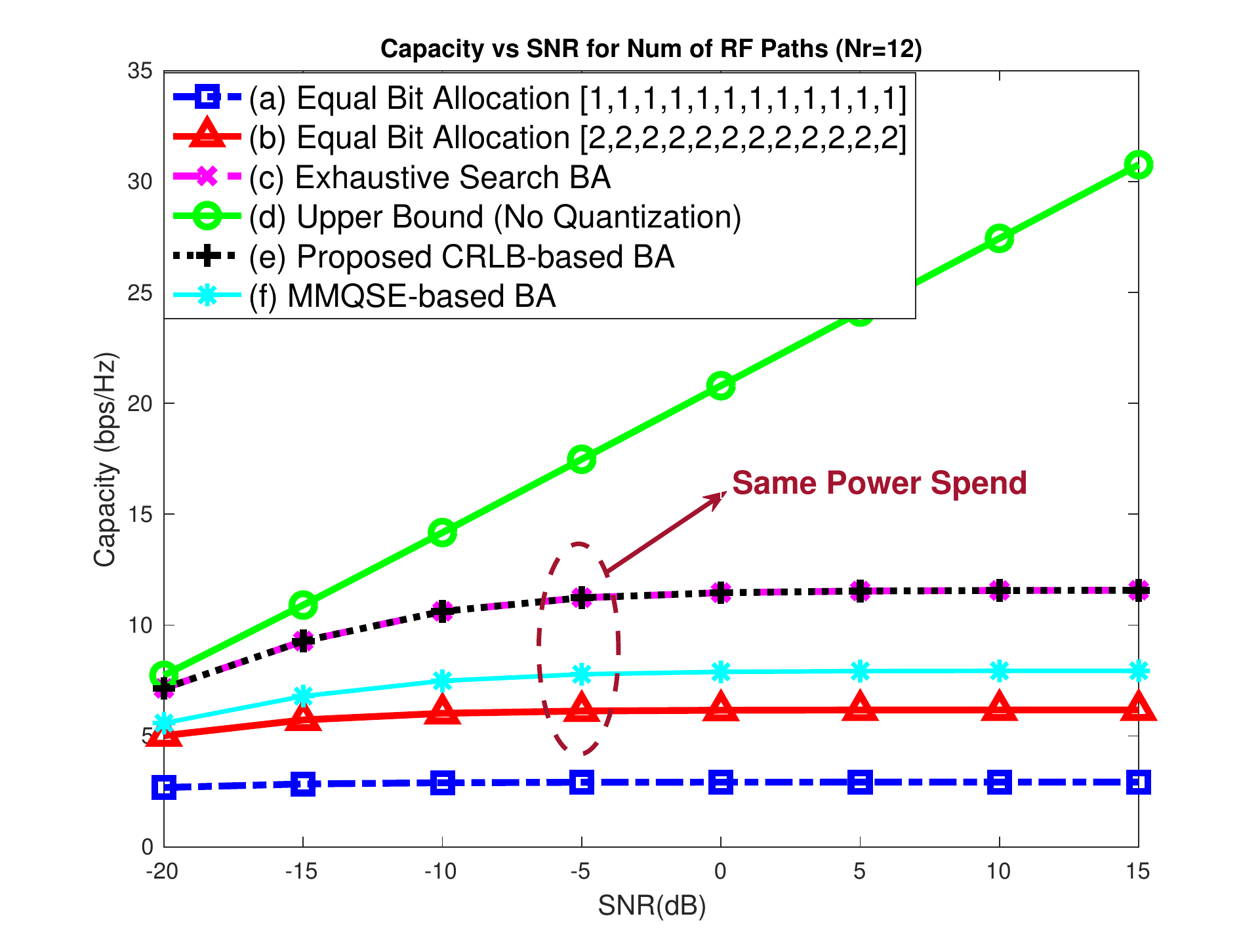}
\caption{Capacity vs. SNR for $N_s=12$ for all 1-bit, 2-bit, ES BA, MSQE BA and CRLB BAs.}
\label{fig:crlb_Capacity_Nr12_H64by32.eps}
\end{figure}
\vspace{-0.3in}
\begin{table}
\begin{center}
\begin{tabu} to 0.5\textwidth {| l| l| }
 \hline
 \textbf{Parameters}  & \textbf{Value/Type} \\
 \hline
Frequency & 28Ghz \\
\hline
Environment & Line of sight \\
\hline
T-R seperation & 100m\\
\hline
TX/RX array type & ULA\\
\hline 
Num of TX/RX elements $N_t$/$N_r$ & 32/64\\
\hline
TX/RX  antenna spacing & $\lambda/2$\\
\hline
\end{tabu}
\vspace{1mm}
\caption{$\text{Channel parameters for NYUSIM model $\cite{nyusim}$}$} \label{nyusimtab}.
\vspace{-1in}
\end{center}
\end{table}
\subsection{Computational Complexity}\label{speed}
We evaluate the computational complexity in terms of the number of multiplications and additions required to arrive at the BA. We assume that the Analog combiner ${\bold{\tilde{W}}_A^H}$ and the digital combiner $\bold{W}_D^H$ are derived as defined in the previous section. We analyze the computational complexity of ES, MQSE and proposed CRLB based BA algorithms.\\
\indent
It can be seen that ES BA requires $\gamma \big(N_s^2 + 2N_s\big)$ complex multiplications, $3N_s^2$ real multiplications and $\gamma \big(N_s(N_s-1) + N_s\big)$ complex additions. Here $\gamma$ is the number of MSE $\delta$ evaluations.\\
\indent
In case of the proposed BA, we precompute the gain term $K_f(b_i)$ in $(\ref{gain_term})$ for all allowable bits (e.g, 1 to 4) and for all $N_s$ RF paths. The computation of $l_i$'s require computation of ${\text{diag}}[ {\bold{W}_D^H}{\bold{\Sigma}}^2{\bold{W}_D}+{\bold{I}_{N_s}}]$. This requires no complex multiplications or additions, and $3N_s^2$ real multiplications and $2N_s^2 + N_s(N_s-1)$ real additions. For the computation of ${K_{f}}(\bold{b}_j)$ as defined in $(\ref{gain_termsum})$ for all possible BA's that satisfy the ADC power constraint, we require $\mu \big(N_s-1 \big)$ real additions. Thus a total of $3N_s^2 + 3N_sN_b$ real multiplications and $3N_s^2 + N_sN_b + \mu \big(N_s-1 \big)$ real additions with our proposed BA. Here, $N_b$ is the number of ADC bits resolution range and $\mu$ is the number of evaluations of ${K_{f}}(\bold{b}_j)$.\\
\indent
For the rev-MMQSE BA algorithm proposed in \cite{VarBitEner}, the optimal binary search is done by precomputing and storing the term $\log_2\Bigg( \frac{\left\vert\left\vert\big[ \bold{H}_b \big]_{i,:}\right\vert\right\vert^{\frac{2}{3}}}{\sum_{j=1}^{N_s}\left\vert\left\vert\big[ \bold{H}_b \big]_{j,:}\right\vert\right\vert^{\frac{2}{3}}}\Bigg)$ for $1 \leq i \leq N_s$ as defined in proposition 1 of \cite{VarBitEner}. This requires $3N_s^2$ real multiplications and $N_s^2+N_s(N_s-1)$ real additions. A $T^{th}$ order polynomial evaluation of the cube root and $\log_2$ is assumed for the above term. We assume that the BA solution is arrived at half stage of the binary search, with number of allowable bits between 1-4 on each RF path. The binary search would only require real additions as the $\log_2$ terms have been precomputed and stored for $1 \leq i \leq N_s$. As a result, the rev-MMQSE BA on average requires $N_s(3N_s+T^2+T+1)$ real multiplications and $2N_s^2+N_s(2T-1)+3(N_s-1)\log_2N_s$ real additions.
\begin{table}
\begin{center}
\begin{tabu} to 1.0\textwidth { | X[c] | X[c] | X[c] | X[c] | X[c] | X[c] | X[c] |}
\hline
\multirow{1}{1.5cm}{$N_s$} & \multicolumn{3}{c|}{Number of complex multiplications} & %
    \multicolumn{3}{c|}{Number of complex additions}\\
\cline{2-7}
& \centering \small ES & \small MQSE-based & \centering \small CRLB-based & \centering \small ES & \small MQSE-based & \centering \small CRLB-based\\
\hline
\centering \multirow{2}{*}{8}  & \small1502400 & \small 440$^{\S}$ & \small \textcolor{red}{288$^{\S}$} & \small 1201920 & \small 263$^{\dagger}$ & {\small \textcolor{red}{13370$^{\dagger}$}} \\
& {\small {192$^{\S}$}} & & & & &\\
\hline
\centering \multirow{2}{*}{12} & \small 223865040 & \small 804$^{\S}$ & \small \textcolor{red}{576$^{\S}$} & \small191884320 & \small 528$^{\dagger}$ & {\scriptsize \textcolor{red}{1466263$^{\dagger}$}}\\
& {\small {432$^{\S}$}} & & & & &\\
\hline
\end{tabu}
\footnotesize{$^{\S}$ Real multiplications. $^{\dagger}$ Real additions} \\
\caption{Computational complexity in terms of total number of multiplications and additions ($N_b$=4 and $T=5$)} \label{tab:CRLBTab1}
\end{center}
\end{table}
\vspace{-0.7in}
\section{Conclusion}\label{conc}
The SU mmWave Ma-MIMO is a typical use case in the deployment of the back-haul wireless links between the Base Stations (BS) considered in the 5G standards. In this paper, we study the adoption of variable-resolution ADCs in the SU mmWave Ma-MIMO receivers. Using the SVD of the Ma-MIMO channel matrix we arrived at an optimal bit-allocation (BA) condition based on the criterion of (i) minimizing MSE of the received, quantized and combined signal and (ii) maximizing the capacity of the Ma-MIMO channel encompassing hybrid precoder and hybrid combiner, under a receiver power constraint. Both these criteria lead to the same BA as a consequence of the relationship between the capacity and the CRLB that we derived. In case of (i) we showed that the MSE approaches the CRLB, with the CRLB being a function of hybrid precoder, combiner and BA matrix. We minimize the CRLB with respect to BA matrix as the MSE minimization poses multiple constraints. In the case of (ii) we derive the expression for the capacity of the Ma-MIMO and show that it is a function of the CRLB in (i). We compared the performance and computational complexity of the proposed BA techniques with ES BA, MQSE BA. It is seen that minimizing the MQSE under power constraint doesn't always ensure optimal MSE or capacity performance. We show that the MSE and capacity performance of the proposed BA is very close to that of the ES BA. The computational complexity of our proposed method has significant improvement compared to ES BA method and slightly inferior to MQSE based BA. The  increase in computational expense comes at a significant improvement in MSE and capacity performance.
\appendix
\section{Appendix}\label{FirstAppendix}
\renewcommand{\thesubsection}{\Alph{subsection}}
\numberwithin{equation}{section}
\newtheorem{theorem}{Theorem}
\begin{theorem}\label{Thm1}
If $\mbox{MSE}(\bold{x})$ is the Mean Square Error matrix as defined in ($\ref{10ac}$), and if ${\bold{\hat{x}}}$ is the estimate of $\bold{x}$ given the observation $\bold{y}$ in ($\ref{eq9a}$) for a given fixed $\bold{H}$, $\bold{W}_A^H$, $\bold{W}_D^H$, ${\bold{W}_{\alpha}}{\big( {\bold{b}} \big)}$, then there exists an estimator that is efficient. That is, $\mbox{MSE}(\bold{x})$ achieves the CRLB ${\bold{I}^{-1}({\bold{\hat{x}}})}$ under similar conditions, in other words $\mbox{MSE}(\bold{x})$ is indeed MMSE.
\end{theorem}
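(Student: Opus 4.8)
The plan is to read the observation model \eqref{eq9a} as the linear Gaussian model $\bold{y} = \bold{K}\bold{x} + \bold{n_1}$ with $\bold{K} = \bold{W}_D^H\bold{W}_{\alpha}\bold{W}_A^H\bold{U}\bold{\Sigma}$, $\bold{n_1} = \bold{W}_D^H\bold{W}_{\alpha}\bold{W}_A^H\bold{n} + \bold{W}_D^H\bold{n_q}$, and $\bold{n_1} \sim \mathcal{CN}(\bold{0},\bold{\Phi})$ with $\bold{\Phi} = \sigma_n^2\bold{G}\bold{G}^H + \bold{W}_D^H\bold{D}_q^2\bold{W}_D$, and then to invoke the classical efficiency theorem for estimation in linear Gaussian models. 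The Gaussianity of $\bold{n_1}$ needed here is exactly what Theorem~\ref{Thm2} supplies, and the hypotheses fix $\bold{H},\bold{W}_A^H,\bold{W}_D^H,\bold{W}_{\alpha}$ so that $\bold{K}$ and $\bold{\Phi}$ are known (non-random) and the covariance $\bold{\Phi}$ does not depend on the unknown $\bold{x}$.

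First I would treat $\bold{x}$ as a deterministic unknown and write the log-likelihood of $\bold{y}$, namely $\ln p(\bold{y}\mid\bold{x}) = -\ln\det(\pi\bold{\Phi}) - (\bold{y}-\bold{K}\bold{x})^H\bold{\Phi}^{-1}(\bold{y}-\bold{K}\bold{x})$. Differentiating with the complex (Wirtinger) derivative gives the score $\partial \ln p/\partial \bold{x}^{*} = \bold{K}^H\bold{\Phi}^{-1}(\bold{y}-\bold{K}\bold{x})$. The key step is to exhibit the factorization $\bold{K}^H\bold{\Phi}^{-1}(\bold{y}-\bold{K}\bold{x}) = \big(\bold{K}^H\bold{\Phi}^{-1}\bold{K}\big)\big(\bold{\hat{x}} - \bold{x}\big)$ with $\bold{\hat{x}} = \big(\bold{K}^H\bold{\Phi}^{-1}\bold{K}\big)^{-1}\bold{K}^H\bold{\Phi}^{-1}\bold{y}$; by the regularity (equality) condition in the Cramer--Rao theorem, the score factoring as $\bold{I}(\bold{\hat{x}})(\bold{\hat{x}}-\bold{x})$ certifies that $\bold{\hat{x}}$ is efficient and attains the bound, so the Fisher information is $\bold{I}(\bold{\hat{x}}) = \bold{K}^H\bold{\Phi}^{-1}\bold{K}$ and the CRLB is $\bold{I}^{-1}(\bold{\hat{x}}) = \bold{K}^{-1}\bold{\Phi}(\bold{K}^H)^{-1}$. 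This establishes existence of an efficient estimator, which is the substance of the claim.

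It then remains to match this bound with \eqref{eq26a} and with the MSE. Substituting $\bold{\Phi} = \sigma_n^2\bold{G}\bold{G}^H + \bold{W}_D^H\bold{D}_q^2\bold{W}_D$ splits $\bold{K}^{-1}\bold{\Phi}(\bold{K}^H)^{-1}$ into $\sigma_n^2\bold{K}^{-1}\bold{G}\bold{G}^H(\bold{K}^H)^{-1} + \bold{K}^{-1}\bold{W}_D^H\bold{D}_q^2\bold{W}_D(\bold{K}^H)^{-1}$; the second summand is already the second term of \eqref{eq26a}, and using the unbiasedness/MSE condition $\bold{K}=\bold{I}_{N_s}$ (equivalently $\bold{G}\bold{G}^H = \bold{\Sigma}^{-2}$, derived just before \eqref{10ac}) collapses the first summand to $\sigma_n^2\bold{\Sigma}^{-2}$, giving exactly \eqref{eq26a}. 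Under this same condition the estimator $\bold{\hat{x}}=\bold{y}$ is unbiased with error covariance $\bold{\Phi} = \sigma_n^2\bold{\Sigma}^{-2} + \bold{W}_D^H\bold{D}_q^2\bold{W}_D$, which is precisely $\mbox{MSE}(\bold{x})$ in \eqref{10ac}; hence the MSE equals the CRLB and is the MMSE. The main obstacle I anticipate is not the efficiency argument, which is standard once $\bold{n_1}$ is known to be Gaussian, but the bookkeeping in this last algebraic reduction: justifying invertibility of $\bold{K}$ (and of $\bold{W}_A^H,\bold{W}_{\alpha}$), handling the non-square $\bold{U}$ through $\bold{U}^H\bold{U}=\bold{I}_{N_s}$ rather than $\bold{U}\bold{U}^H=\bold{I}$, and keeping the Wirtinger-derivative conventions consistent so that the factor $\bold{K}^H\bold{\Phi}^{-1}\bold{K}$ emerges cleanly as the Fisher information.
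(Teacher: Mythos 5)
Your proposal follows essentially the same route as the paper: cast \eqref{eq9a} as the linear Gaussian model $\bold{y}=\bold{K}\bold{x}+\bold{n_1}$ with $\bold{n_1}\sim\mathcal{CN}(\bold{0},\bold{\Phi})$, invoke the classical attainment of the CRLB in linear Gaussian models to get $\bold{I}^{-1}(\bold{\hat{x}})=(\bold{K}^H\bold{\Phi}^{-1}\bold{K})^{-1}$, and then use the condition $\bold{K}=\bold{I}_{N_s}$ (equivalently $\bold{G}\bold{G}^H=\bold{\Sigma}^{-2}$) to show the bound coincides with $\mbox{MSE}(\bold{x})$ in \eqref{10ac}. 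The only difference is that you spell out the score factorization that the paper delegates to a citation of Kay, so the argument is correct and matches the paper's proof.
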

\begin{proof}
We look at the problem in ($\ref{eq9a}$) as an Estimation problem, such that we need to estimate $\bold{x}$ given $\bold{y}$ is observed; given that $\bold{W}_A^H$, $\bold{W}_D^H$, ${\bold{W}_{\alpha}}$ are fixed. ($\ref{eq9a}$) can be rewriten  as 
\begin{equation}\label{eq13a}
{\bold{y}} = {\bold{K}}{\bold{x}} + {\bold{n_1}}
\end{equation}
where ${\bold{K}} = {\bold{W}_D^H}{\bold{W}_{\alpha}}{\bold{W}_A^H}{\bold{U}}{\bold{\Sigma}}$, and 
$\bold{n_1} = {\bold{W}_D^H}{\bold{W}_{\alpha}}{\bold{W}_A^H}{\bold{n}} + {\bold{W}_D^H}{\bold{n_q}}.$
We know that $\bold{n}$ and $\bold{n_q}$ are Gaussian random vectors, with the following statistics:
\vspace{-12mm}
\begin{center}
\begin{equation}\label{eq16a}
\bold{n} \sim \mathcal{N}(\bold{0},{\sigma_n^2}{\bold{I}_{N_r}}),\text{  }\bold{n_q} \sim \mathcal{N}(\bold{0},{\bold{D}_q^2}).
\end{equation}
\end{center}
The statistical distribution of $\bold{n_1}$ is given by: 
\begin{equation}\label{eq17a}
E[\bold{n_1}] = {\bold{W}_D^H}{\bold{W}_{\alpha}}{\bold{W}_A^H}E[{\bold{n}}] + {\bold{W}_D^H}E[{\bold{n_q}}] = {\bold{0}},
\end{equation}
\begin{equation}\label{eq18a}
{\sigma_{n_1}^2} = E[ (\bold{n_1} - E[\bold{n_1}])^2 ] = E[{\bold{n_1}}{\bold{n_1}}^H] = {\sigma_n^2}{\bold{G}}{\bold{G}^H} + {\bold{W}_D^H}{\bold{D}_q^2}{\bold{W}_D}.
\end{equation}
%
%
Thus, the statistics of $\bold{n_1}$ follows: 
\begin{equation}\label{eq22a}
\bold{n_1} \sim \mathcal{N}(\bold{0},({{\sigma_n^2}{\bold{G}}{\bold{G}^H} + {\bold{W}_D^H}{\bold{D}_q^2}{\bold{W}_D}})).
\end{equation}
%
Equation ($\ref{eq13a}$) can be seen as a linear model, in which we intend to estimate $\bold{x}$, given the observation $\bold{y}$. We can express the conditional probability distribution of ${\bold{y}}$ given ${\bold{x}}$ as $\cite{Kay}$
\begin{equation}\label{eq23a}
p({\bold{y} \vert {\bold{x}}}) \sim \frac{1}{({2\pi}{{\sigma_{n_1}^2}})^{\frac{N_s}{2}}} \text{exp} \bigg\{ -\frac{1}{2{\sigma_{n_1}^2}} ({\bold{y}}-{\bold{K}}{\bold{x}})^H({\bold{y}}-{\bold{K}}{\bold{x}}) \bigg\}.
\end{equation}
From ($\ref{eq13a}$) and ($\ref{eq23a}$), it is straightforward to see that the ``regularity conditions" are satisfied, and hence for such a linear estimator, we can write the expression for the CRLB as $\cite{Kay}$
\begin{equation}\label{eq24a}
{\bold{I}^{-1}({\bold{\hat{x}}})} = ({\bold{K}^H}{\bold{C}^{-1}}{\bold{K}})^{-1},
\end{equation}
where C is the noise covariance matrix of ${\bold{n_1}}$ as given in ($\ref{eq22a}$).
Substituting $\bold{K}$ and $\bold{C}$ in ($\ref{eq24a}$), we arrive at
\begin{equation}\label{eq26a_appx}
{\bold{I}^{-1}({\bold{\hat{x}}})} = ({\bold{K}^H}{\bold{C}^{-1}}{\bold{K}})^{-1}={\sigma_n^2}{\bold{\Sigma}^{-2}}+{\bold{K}^{-1}}{\bold{W}_D^H}{\bold{D}_q^2}{\bold{W}_D}({\bold{K}^H})^{-1}.
\end{equation}
Now, on setting ${\bold{K}} = {\bold{I}_{N_s}}$, the expression for the CRLB is simplified to 
\begin{equation}\label{eq27a}
{\bold{I}^{-1}({\bold{\hat{x}}})} = {\sigma_n^2}{\bold{\Sigma}^{-2}}+{\bold{W}_D^H}{\bold{D}_q^2}{\bold{W}_D}.\end{equation}
Thus, for a given ${\bold{W}_A^H}$, ${\bold{W}_D^H}$ and $\bold{W}_\alpha$, we see that the expression for CRLB in ($\ref{eq27a}$) is the same as the $\mbox{MSE}(\bold{x})$ in ($\ref{10ac}$). Hence, $\mbox{MSE}(\bold{x})$ in ($\ref{10ac}$) is indeed MMSE.
\end{proof}
\begin{theorem}\label{Thm2}
If $\bold{n}_1 = {\bold{W}_D^H}{\bold{W}_{\alpha}}{\bold{W}_A^H}{\bold{n}} + {\bold{W}_D^H}{\bold{n}_q}$, where $\bold{n}$ is $\bold{n} \sim \mathcal{CN}(\bold{0},{\sigma_n^2\bold{I}_{N_s}})$ and ${\bold{n}_q} \sim \mathcal{N}(\bold{0},{\bold{D}_q^2})$ with ${\bold{D}_q^2} = {\bold{W}_{\alpha}}{\bold{W}_{1-\alpha}}{\text{diag}}[ {\bold{W}_A^H}{\bold{H}}({\bold{W}_A^H}{\bold{H}})^H+{\bold{I}_{N_s}}]$, then it can be shown that $\bold{n}_1$ is circularly symmetric complex Gaussian (CSCG) vector. That is, $\bold{n_1} \sim \mathcal{CN}(\bold{0},\bold{\Phi})$.
\end{theorem}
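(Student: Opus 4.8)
The plan is to verify the three defining properties of a circularly symmetric complex Gaussian (CSCG) vector for $\bold{n}_1$: that it is zero-mean, that it is jointly Gaussian, and that it is \emph{proper}, i.e., its pseudo-covariance (relation) matrix $E[\bold{n}_1\bold{n}_1^T]$ vanishes. Writing $\bold{n}_1 = \bold{G}\bold{n} + \bold{W}_D^H\bold{n}_q$ with $\bold{G} = \bold{W}_D^H\bold{W}_{\alpha}\bold{W}_A^H$, I would exploit the fact that all three of these properties are preserved under deterministic linear maps: the image of a jointly CSCG vector under a fixed linear transformation is again CSCG. Thus the task reduces to showing that the stacked vector $\bold{w} = [\bold{n}^T, \bold{n}_q^T]^T$ is jointly CSCG, after which $\bold{n}_1 = [\bold{G}, \bold{W}_D^H]\bold{w}$ inherits the property automatically.

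The second step is to establish that $\bold{w}$ is jointly CSCG. By hypothesis $\bold{n}\sim\mathcal{CN}(\bold{0},\sigma_n^2\bold{I})$ is CSCG, and within the AQNM the quantization noise $\bold{n}_q$ is likewise a zero-mean proper complex Gaussian with covariance $\bold{D}_q^2$ (consistent with the signal-model section, where $\bold{n}_q\sim\mathcal{CN}(\bold{0},\bold{D}_q^2)$). Since the quantization noise is uncorrelated with the AWGN, both $E[\bold{n}\bold{n}_q^H]=\bold{0}$ and $E[\bold{n}\bold{n}_q^T]=\bold{0}$ hold, so the full covariance of $\bold{w}$ is block-diagonal and its full pseudo-covariance is the zero matrix. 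A zero-mean Gaussian vector with vanishing pseudo-covariance is proper, hence $\bold{w}$ is jointly CSCG and so is its linear image $\bold{n}_1$.

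Finally, I would pin down the parameters by a direct second-moment computation. Linearity gives $E[\bold{n}_1]=\bold{G}E[\bold{n}]+\bold{W}_D^H E[\bold{n}_q]=\bold{0}$. Expanding the pseudo-covariance $E[\bold{n}_1\bold{n}_1^T]$ produces four terms, each of which vanishes: the $\bold{n}\bold{n}^T$ and $\bold{n}_q\bold{n}_q^T$ terms by properness of $\bold{n}$ and $\bold{n}_q$, and the two cross terms because $E[\bold{n}\bold{n}_q^T]=\bold{0}$. For the covariance, the cross terms again drop out and
\begin{equation*}
E[\bold{n}_1\bold{n}_1^H] = \sigma_n^2\bold{G}\bold{G}^H + \bold{W}_D^H\bold{D}_q^2\bold{W}_D = \bold{\Phi},
\end{equation*}
which identifies $\bold{n}_1\sim\mathcal{CN}(\bold{0},\bold{\Phi})$ and in particular justifies the tightness of the entropy bound used in \eqref{hn_equal}.

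The main obstacle is conceptual rather than computational: the conclusion that $\bold{n}_1$ is CSCG cannot be drawn from the marginal Gaussianity of $\bold{n}$ and $\bold{n}_q$ alone, but requires their \emph{joint} properness. The delicate point is therefore to justify that $\bold{n}_q$ is genuinely circularly symmetric (the statement writes $\mathcal{N}$, but the AQNM and the desired conclusion force the proper-complex interpretation $\mathcal{CN}$) and that the AWGN and quantization noise carry no pseudo-correlation, so that the stacked vector has a vanishing relation matrix. Once joint properness is secured, invariance of the CSCG class under linear maps renders the remainder routine.
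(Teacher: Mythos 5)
Your proposal is correct and follows essentially the same route as the paper: both reduce the claim to checking zero mean, vanishing pseudo-covariance $E[\bold{n}_1\bold{n}_1^T]=\bold{0}$, and the covariance identity $E[\bold{n}_1\bold{n}_1^H]=\sigma_n^2\bold{G}\bold{G}^H+\bold{W}_D^H\bold{D}_q^2\bold{W}_D=\bold{\Phi}$, with the cross terms killed by the uncorrelatedness of $\bold{n}$ and $\bold{n}_q$. The one ``delicate point'' you flag --- that $\bold{n}_q$ must itself be proper, not merely Gaussian with covariance $\bold{D}_q^2$ --- is exactly what the paper addresses by factoring $\bold{n}_q=\bold{D}_q\bold{w}$ with $\bold{w}\sim\mathcal{CN}(\bold{0},\bold{I}_{N_s})$, so your reading of the $\mathcal{N}$-versus-$\mathcal{CN}$ notation matches the paper's intent.
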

\begin{proof}
The condition for the random vector $\bold{n}_1$ to be CSCG is \cite{RobertGallager}
\begin{equation}\label{proof_eq1}
E[\bold{n}_1] = E[\bold{n}_1\bold{n}_1^T] = \bold{0}.
\end{equation}
Here, $E[\bold{n}_1\bold{n}_1^T]$ is the pseudo covariance. We first prove that $\bold{n}_q$ is CSCG distributed as $\bold{n}_q \sim \mathcal{N}(\bold{0},{\bold{D}_q^2})$. Given ${\bold{D}_q^2} = E[\bold{n}_q\bold{n}_q^H] = {\bold{W}_{\alpha}}{\bold{W}_{1-\alpha}}{\text{diag}}[ {\bold{W}_A^H}{\bold{H}}({\bold{W}_A^H}{\bold{H}})^H+{\bold{I}_{N_s}}]$; with $\bold{W}_{\alpha}$, $\bold{W}_{1-\alpha}$ and ${\text{diag}}[ {\bold{W}_A^H}{\bold{H}}({\bold{W}_A^H}{\bold{H}})^H+{\bold{I}_{N_s}}]$ being positive real diagonal matrices, effectively results in the covariance matrix ${\bold{D}_q^2}$ being positive real diagonal.\\
A necessary and sufficient condition \cite{Vishwa, RobertGallager} for a random vector $\bold{n}_q$ to be a CSCG random vector is that it has the form $\bold{n}_q = \bold{A}\bold{w}$ where $\bold{w}$ is iid complex Gaussian, that is $\bold{w} \sim \mathcal{CN}(\bold{0},\bold{I_{N_s}})$ and $\bold{A}$ is an arbitrary complex matrix.
Since ${\bold{D}_q^2}$ is positive real diagonal matrix, we can express
\begin{equation}\label{proof_eq2}
\bold{n}_q = \bold{D}_q\bold{w},
\end{equation}
where $\bold{w} \sim \mathcal{CN}(\bold{0},\bold{I_{N_s}})$. This leads to $E[\bold{n}_q] = \bold{D}_qE[\bold{w}] = \bold{0}$ and $E[\bold{n}_q\bold{n}_q^T] = \bold{D}_qE[\bold{w}\bold{w}^T]\bold{D}_q = \bold{0}$. Hence $\bold{n}_q$ is circularly symmetric jointly Gaussian random vector. $\bold{n}_q \sim \mathcal{CN}(\bold{0},{\bold{D}_q^2})$.\\
\noindent
Using $\ref{proof_eq2}$, we can express $\bold{n}_1$ as
\begin{equation}\label{proof_eq4}
\begin{split}
\bold{n}_1 = {\bold{W}_D^H}{\bold{W}_{\alpha}}{\bold{W}_A^H}{\bold{n}} + {\bold{W}_D^H}\bold{D}_q{\bold{w}}
\end{split}
\end{equation}
Since we have $\bold{n}$ and $\bold{w}$ as i.i.d complex Gaussian vectors, we can write
\begin{equation}\label{proof_eq5}
\begin{split}
&E[\bold{n}\bold{n}^T] = E[\bold{w}\bold{n}^T] = E[\bold{n}\bold{w}^H] = E[\bold{w}\bold{n}^H] = \bold{0},\\
&E[\bold{n}\bold{n}^H] = \sigma_n^2\bold{I}_{N_s},\\
&E[\bold{w}\bold{w}^H] = \bold{I}_{N_s}.\\
\end{split}
\end{equation}
Thus, we arrive at
\begin{equation}\label{proof_eq6}
\begin{split}
E[\bold{n}_1] &= {\bold{W}_D^H}{\bold{W}_{\alpha}}{\bold{W}_A^H}E[{\bold{n}}] + {\bold{W}_D^H}\bold{D}_qE[{\bold{w}}] = 0.\\
E[\bold{n}_1\bold{n}_1^T] &= \bold{G}E[\bold{n}\bold{n}^T]\bold{G}^T + \bold{G}E[\bold{n}\bold{w}^T]\bold{D}_q\bold{W}_D + \bold{W}_D^T\bold{D}_qE[\bold{w}\bold{n}^T]\bold{G}^T + \bold{W}_D^T\bold{D}_qE[\bold{w}\bold{w}^T]\bold{D}_q\bold{W}_D=\bold{0}.
\end{split}
\end{equation}
Also, 
\begin{equation}\label{proof_eq7}
\begin{split}
&E[\bold{n}_1\bold{n}_1^H] = \bold{G}E[\bold{n}\bold{n}^H]\bold{G}^H+\bold{G}E[\bold{n}\bold{w}^H]\bold{D}_q\bold{W}_D+ \bold{W}_D^H\bold{D}_qE[\bold{w}\bold{n}^H]\bold{G}^H+\bold{W}_D^H\bold{D}_qE[\bold{w}\bold{w}^H]\bold{D}_q\bold{W}_D,\\
&E[\bold{n}_1\bold{n}_1^H] = \bold{\Phi} = \sigma_n^2\bold{G}\bold{G}^H + \bold{W}_D^H\bold{D}_q^2\bold{W}_D.
\end{split}
\end{equation}
Thus, ${\bold{n}_1} \sim {\mathcal{CN}}({\bold{0}},{\bold{\Phi}})$ is a CSCG vector.
\end{proof}
\newtheorem{lem}{Lemma}
\begin{lem}\label{lemm1}
The term  $\log_2  \Big( q(b_i) + 1 \Big)$ for $0 \leq q(b_i) < 1$, can be approximated as $\log_2  \Big( q(b_i) + 1 \Big) \simeq \frac{q(b_i)}{\ln2}$. 
\end{lem}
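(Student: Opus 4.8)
The plan is to reduce the claim to the first-order Maclaurin expansion of the natural logarithm. First I would apply the change-of-base identity to write
\begin{equation}\label{lemm1_plan1}
\log_2\big( q(b_i) + 1 \big) = \frac{\ln\big( 1 + q(b_i) \big)}{\ln 2},
\end{equation}
and set $x \triangleq q(b_i)$, which by hypothesis satisfies $0 \le x < 1$. This isolates the only nontrivial part of the argument, namely the behaviour of $\ln(1+x)$ on the half-open unit interval.

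Next I would invoke the Maclaurin series
\begin{equation}\label{lemm1_plan2}
\ln(1+x) = \sum_{k=1}^{\infty} \frac{(-1)^{k+1}}{k}\, x^k = x - \frac{x^2}{2} + \frac{x^3}{3} - \cdots,
\end{equation}
which is valid (convergent) precisely for $-1 < x \le 1$ and therefore holds throughout the range $0 \le x < 1$ of interest. Retaining only the leading term and discarding the higher-order contributions gives $\ln(1+x) \approx x$; dividing by $\ln 2$ and substituting back $x = q(b_i)$ into \eqref{lemm1_plan1} then yields the asserted approximation $\log_2\big( q(b_i)+1 \big) \simeq q(b_i)/\ln 2$.

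The main obstacle is making precise the sense in which the truncation is justified, i.e.\ controlling the remainder $\ln(1+x) - x$. I would handle this by observing that for $0 \le x < 1$ the series \eqref{lemm1_plan2} is alternating with terms of monotonically decreasing magnitude, so by the alternating-series error bound the discarded tail satisfies $\big|\ln(1+x) - x\big| \le x^2/2$. Since this bound is $O(x^2)$ while the retained term is $O(x)$, the relative error vanishes as $x \to 0$ and stays small for $q(b_i)$ bounded away from unity, which confirms that the first-order term dominates on the stated interval and legitimises the approximation.
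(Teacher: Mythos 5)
Your proposal is correct and follows essentially the same route as the paper: both reduce $\log_2(1+q)$ to $\ln(1+q)/\ln 2$, expand $\ln(1+x)$ as its alternating power series (the paper obtains it by integrating the geometric series, you quote the Maclaurin expansion directly), and keep only the linear term. Your alternating-series remainder bound $\lvert\ln(1+x)-x\rvert \le x^2/2$ is a small added touch of rigor that the paper omits, but it does not change the argument.
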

\begin{proof}
We can write:\\
$\log_2  \Big( \frac{p\sigma_i^2}{\sigma_n^2 + g(b_i)l_i} + 1 \Big) = \frac{\ln \Big( \frac{p\sigma_i^2}{\sigma_n^2 + g(b_i)l_i} + 1 \Big)}{\ln 2}$.\\
\indent
We can approximate $g(b_i)$ as $c2^{-db_i}$, where $d=2.0765, c=2.40667$. For the sake of simplicity, we will replace the variable $\bold{b} \in \mathbb{I}^{N_s \times 1}$ with $\bold{x} \in \mathbb{R}^{N_s \times 1}$.\\
\indent
We will now define $f\big(p(x_i)\big) = \ln \Big( \frac{p\sigma_i^2}{\sigma_n^2 + c2^{dx_i}l_i} + 1 \Big)$, where $p(x_i) = \frac{p\sigma_i^2}{\sigma_n^2 + c2^{dx_i}l_i}$. For a geometric series below, which has a common ratio of $-p(x_i)$, where $0 \leq p(x_i) < 1$, we can write
\begin{equation}\label{apx_c1}
1-p(x_i)+p(x_i)^2-p(x_i)^3+.. = \frac{1}{1+p(x_i)}.
\end{equation}
\begin{equation}\label{apx_c2}
\ln(1 + p(x_i)) = \int \frac{1}{1 + p(x_i)} d(p(x_i)),
\end{equation}
substituting for $\frac{1}{1+p(x_i)}$ into the integral in $\ref{apx_c2}$ from $\ref{apx_c1}$, we have
\begin{equation}\label{apx_c3}
\ln(1 + p(x_i)) = p(x_i)-\frac{p(x_i)^2}{2}+\frac{p(x_i)^3}{3}-\frac{p(x_i)^4}{4}+...
\end{equation}
Since we know that $0 \leq p(x_i) < 1$, the higher powers of $p(x_i)$ are negligible and thus the above series can be approximated  as
\begin{equation}\label{apx_c4}
f\big(p(x_i)\big)  \simeq p(x_i).
\end{equation}
By re-substituting variable $\bold{x} \in \mathbb{R}^{N_s \times 1}$ with $\bold{b} \in \mathbb{I}^{N_s \times 1}$, we can effectively write
\begin{equation}\label{apx_c5}
\log_2  \Big( \frac{p\sigma_i^2}{\sigma_n^2 + g(b_i)l_i} + 1 \Big) \simeq \frac{1}{\ln2}\Big(\frac{p\sigma_i^2}{\sigma_n^2 + g(b_i)l_i}\Big). 
\end{equation}
\end{proof}
\vspace{-13mm}
\begin{lem}\label{lemm2}
It can be shown that $\log_2  \Big( q(b_i) + 1 \Big) = \Big(1-\frac{1}{q(b_i)}\Big)P + L(p,\sigma_i^2, \sigma_n^2)$ for $\infty > q(b_i) \geq 1$, where the terms $P$ and $L(p,\sigma_i^2, \sigma_n^2)$ are not functions of $b_i$.
\end{lem}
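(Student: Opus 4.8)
The plan is to treat $\log_2\big(q(b_i)+1\big)$ as a smooth function of the single scalar $\xi \triangleq 1/q(b_i)$ and perform a first-order Taylor expansion about a reference point $\xi_0$ chosen independently of $b_i$. First I would set $F(\xi) \triangleq \log_2(1/\xi + 1)$, so that $F\big(1/q(b_i)\big) = \log_2\big(q(b_i)+1\big)$, and observe that on the regime of interest $q(b_i)\ge 1$ the argument $\xi$ lies in a bounded interval $[\,\sigma_n^2/(p\sigma_i^2),\,1\,]$ that is bounded away from $0$: indeed $q(b_i)$ is monotone in $b_i$ through $g(b_i)$ and is capped by its infinite-resolution value $q_{\max}=p\sigma_i^2/\sigma_n^2$ (attained when $g(b_i)\to 0$). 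The natural reference is therefore $\xi_0 = 1/q_{\max} = \sigma_n^2/(p\sigma_i^2)$, which is a function of $p,\sigma_i^2,\sigma_n^2$ only and carries no dependence on $b_i$.

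Next I would compute $F'(\xi) = \tfrac{1}{\ln 2}\big(\tfrac{1}{1+\xi}-\tfrac{1}{\xi}\big) = -\tfrac{1}{\ln 2}\,\tfrac{1}{\xi(1+\xi)}$ and write the first-order expansion $F(\xi)\approx F(\xi_0)+F'(\xi_0)(\xi-\xi_0)$. Regrouping this affine expression in $\xi$ gives $F(\xi)\approx -F'(\xi_0)\,(1-\xi) + \big[F(\xi_0)+F'(\xi_0)(1-\xi_0)\big]$. Substituting back $\xi = 1/q(b_i)$ and identifying $P \triangleq -F'(\xi_0) = \tfrac{1}{\ln 2}\tfrac{1}{\xi_0(1+\xi_0)} > 0$ together with $L(p,\sigma_i^2,\sigma_n^2)\triangleq F(\xi_0)+F'(\xi_0)(1-\xi_0)$ yields exactly $\log_2\big(q(b_i)+1\big) \approx \big(1-\tfrac{1}{q(b_i)}\big)P + L(p,\sigma_i^2,\sigma_n^2)$. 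Since $\xi_0$ is fixed, both $P$ and $L$ depend only on $p,\sigma_i^2,\sigma_n^2$ and not on $b_i$, which is precisely the claim; note also that $F'<0$ guarantees $P>0$, consistent with the required sign of the $1/q(b_i)$ term.

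The main obstacle is not the algebra but justifying the truncation, i.e.\ showing the first-order remainder is negligible uniformly over the operative range. I would control it via the Lagrange form $R = \tfrac12 F''(\tilde\xi)(\xi-\xi_0)^2$ for some $\tilde\xi$ between $\xi$ and $\xi_0$, where $F''(\xi)=\tfrac{1}{\ln 2}\big(\tfrac{1}{\xi^2}-\tfrac{1}{(1+\xi)^2}\big)$ is bounded on the compact interval $[\,\sigma_n^2/(p\sigma_i^2),\,1\,]$ precisely because $\xi$ stays bounded away from $0$; the remainder is then $O\big((\xi-\xi_0)^2\big)$ and small whenever the allocated resolution keeps $q(b_i)$ near $q_{\max}$, which is the relevant operating point in Case~2. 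An alternative route, paralleling Lemma~\ref{lemm1}, is to factor $\log_2(q+1)=\log_2 q+\log_2(1+1/q)$ and apply the geometric-series integration to the second term; this, however, leaves a residual $\log_2 q$ that is itself a function of $b_i$ and does not directly produce the affine-in-$1/q$ form, so I would prefer the single-variable Taylor expansion above.
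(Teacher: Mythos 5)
Your argument is correct in substance but takes a genuinely different route from the paper. The paper works with $f(q)=\ln(q+1)\simeq\ln q=-\ln(1/q)$ and expands $\ln(g)$ as a Taylor series about $g=1/q=1$ (i.e.\ about the \emph{low} end $q=1$ of the Case-2 range), keeps the linear term $\big(1-\tfrac{1}{q(b_i)}\big)$ exactly, and sweeps every higher-order term $\big(\tfrac{1}{q(b_i)}-1\big)^n$, $n\ge 2$, into the constant $L$ by asserting (via the binomial expansion in \eqref{apx6a}) that these terms are independent of $b_i$ --- which is the weak point of the paper's own proof, since $\big(\tfrac{1}{q(b_i)}-1\big)^n$ manifestly depends on $b_i$. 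You instead linearize $F(\xi)=\log_2(1/\xi+1)$ about the \emph{high} end $\xi_0=1/q_{\max}=\sigma_n^2/(p\sigma_i^2)$ and control the truncation with a Lagrange remainder on a compact interval bounded away from $\xi=0$. What your version buys is an honest, quantified error bound in place of the paper's unjustified independence claim; what it costs is accuracy away from $q\approx q_{\max}$ (you flag this), whereas the paper's anchor point $q=1$ sits at the opposite end of the same interval and has the complementary weakness.

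Two caveats worth fixing. First, your $P=-F'(\xi_0)=\tfrac{1}{\ln 2}\tfrac{1}{\xi_0(1+\xi_0)}$ depends on $i$ through $\sigma_i^2$ (and on $l_i$ if you kept it in $q_{\max}$). This satisfies the literal lemma statement ("not a function of $b_i$"), but the downstream step \eqref{bitcond_cont1} silently requires $P$ to be a single constant common to all $i$ so that it factors out of $\sum_i P\big(1-\tfrac{1}{q(b_i)}\big)$ and the argmax reduces to $\sum_i 1/(1-\ldots)$-free form; with an $i$-dependent $P_i$ you would be left maximizing a weighted sum and the equivalence with \eqref{bitcond_cont2} would not follow. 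The paper's expansion yields the universal $P=\tfrac{1}{\ln 2}$ precisely because the anchor $g=1$ is the same for every $i$. Second, the lemma is stated as an equality and both proofs only deliver approximations; yours is at least explicit about the remainder, but you should state plainly that the result holds up to $O\big((\xi-\xi_0)^2\big)$ rather than exactly.
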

\begin{proof}
Consider the expansion for $f\big(p(x_i)\big)$ for $\infty > p(x_i) \geq 1$. We can approximate $f\big(p(x_i)\big)$ as
\begin{equation}\label{apx0}
f(\big(p(x_i)\big) = \ln \Big( p(x_i) + 1 \Big) \simeq \ln \Big(p(x_i) \Big).
\end{equation}
Rewriting $f(\big(p(x_i)\big)$ as:
\begin{equation}\label{apx1}
\begin{split}
&f(\big(p(x_i)\big) = -\ln \bigg( \frac{1}{p(x_i)} \bigg) \text{ for } 0 < \frac{1}{p(x_i)} \leq 2;\\
&f(\big(p(x_i)\big) = -\ln \Big(g(x_i)\Big) \text{ where } g(x_i) = \frac{1}{p(x_i)};\\
\text{ or }& f(\big(p(x_i)\big) = -h\big(g(x_i)\big) \text{ where } h\big(g(x_i)\big) = \ln\big(g(x_i)\big);
\end{split}
\end{equation}
Evaluating the Taylor series at $g(x_i = x_0) = 1 = \frac{1}{p(x_i = x_0)}$ with the region of convergence $R: \infty > p(x_i) \geq \frac{1}{2}$, we have
\begin{equation}\label{apx2}
h\big(g(x_i)\big) = h\big(g(x_0)\big) + h^\prime\big(g(x_0)\big)(g(x_i)-1) + \frac{1}{2}h^{\prime\prime}\big(g(x_0)\big)(g(x_i)-1)^2 + \frac{1}{6}h^{\prime\prime\prime}\big(g(x_0)\big)(g(x_i)-1)^3 + ..
\end{equation}
Also:
\begin{equation}\label{apx3}
\begin{split}
h\big(g(x_0)\big) &= \ln(1) = 0; \text{  }h^\prime\big(g(x_i)\big) = \frac{1}{g(x_i)} \implies h^\prime\big(g(x_0)\big) = 1;\\
h^{\prime\prime}\big(g(x_i)\big) &= -\frac{1}{[g(x_i)]^2},  h^{\prime\prime}\big(g(x_0)\big) = -1; \text{  }h^{\prime\prime\prime}\big(g(x_i)\big) = \frac{2}{[g(x_i)]^3}, h^{\prime\prime\prime}\big(g(x_0)\big) = 2;\\
\vdots
\end{split}
\end{equation}
substituting $\ref{apx3}$ in $\ref{apx2}$, we have
\begin{equation}\label{apx4}
\begin{split}
h\big(g(x_i)\big) = &\bigg(\frac{1}{p(x_i)}-1\bigg) - \frac{1}{2}\bigg(\frac{1}{p(x_i)}-1\bigg)^2+\frac{1}{3}\bigg(\frac{1}{p(x_i)}-1\bigg)^3 - ..\\
f\big(p(x_i)\big)= & \bigg(1-\frac{1}{p(x_i)}\bigg) - \sum_{n=2}^{\infty} \frac{(-1)^{(n-1)}}{n}\bigg(\frac{1}{p(x_i)}-1\bigg)^n
\end{split}
\end{equation}
Using binomial expansion for $\Big(\frac{1}{p(x_i)}-1\Big)^n$, we can write
\begin{equation}\label{apx6a}
\bigg(\frac{1}{p(x_i)}-1\bigg)^n = \sum_{k=0}^n {{n}\choose{k}}\frac{-1^{(n-k)}}{(p(x_i))^k} = K_n(p,\sigma_i^2, \sigma_n^2).
\end{equation}
It is to be noted that for $n \ge 2$ and larger values of $k$, the function $K_n(p,\sigma_i^2, \sigma_n^2)$ becomes independent of $x_i$ and is convergent for $p(x_i) \ge 1$. So, we can write $\ref{apx6a}$ safely as
\begin{equation}\label{apx5}
\begin{split}
f\big(p(x_i)\big)= & \bigg(1-\frac{1}{p(x_i)}\bigg) - \sum_{n=2}^{\infty} \frac{(-1)^{(n-1)}K_n(p,\sigma_i^2, \sigma_n^2)}{n}\\
=&\bigg(1-\frac{1}{p(x_i)}\bigg) + G(p,\sigma_i^2, \sigma_n^2),
\end{split}
\end{equation}
Where $G(p,\sigma_i^2, \sigma_n^2) = - \sum_{n=2}^{\infty} \frac{(-1)^{(n-1)}K_n(p,\sigma_i^2, \sigma_n^2)}{n}$ and is a converging series.\\
By re-substituting variable $\bold{x} \in \mathbb{R}^{N_s \times 1}$ with $\bold{b} \in \mathbb{I}^{N_s \times 1}$, we can effectively write
\begin{equation}\label{apx6}
\log_2  \bigg( \frac{p\sigma_i^2}{\sigma_n^2 + g(b_i)l_i} + 1 \bigg) = P\bigg(1-\frac{1}{\frac{p\sigma_i^2}{\sigma_n^2 + g(b_i)l_i}}\bigg) + L(p,\sigma_i^2, \sigma_n^2).
\end{equation}
where $P = \frac{1}{\ln2}$ and $L(p,\sigma_i^2, \sigma_n^2) = \frac{G(p,\sigma_i^2, \sigma_n^2)}{\ln2}$.
\end{proof}

\section*{Acknowledgment}
The authors would like to thank National Instruments for the financial support extended for this work.  



%
\bibliographystyle{IEEEtran}
\bibliography{Opt_BA_MaMIMO_arxiv}
\end{document}